\def\ps@pprintTitle{%
 \let\@oddhead\@empty
 \let\@evenhead\@empty
 \def\@oddfoot{\centerline{\thepage}}%
 \let\@evenfoot\@oddfoot}
\newtheorem{theorem}{Theorem}
\DeclareMathOperator*{\argmin}{arg\,min}
\DeclareMathOperator*{\argmax}{arg\,max}
\algnewcommand\algorithmicto{to}
\newtheorem{lemma}{Lemma}
\def\mydiv{{\big/}}
\def\Kmax{K_{\text{max}}}
\def\x{\mathbf{x}}
\def\y{\mathbf{y}}
\def\res{\mathbf{r}}
\def\AOMP{{A$^\star$OMP}}
\def\AOMPK{{\AOMP$_K$}}
\def\AOMPe{{\AOMP$_e$}}
\def\T{\mathcal{T}}
\def\bphi{\boldsymbol{\phi}}
\newcommand{\subto}{\hspace{4mm}\mbox{subject to\hspace{4mm}}}
\newcolumntype{x}[1]{%
>{\centering\hspace{0pt}}p{#1}}%
\algrenewcommand\algorithmicindent{1.25em}
\def\blfootnote{\gdef\@thefnmark{}\@footnotetext}
\journal{Signal Processing}
\begin{document}

\begin{frontmatter}
\title{Improving \AOMP: Theoretical and Empirical Analyses\\With a Novel Dynamic Cost Model}

\cortext[cor]{Corresponding author}
\author[bte,sabanci]{Nazim Burak Karahanoglu\corref{cor}}
\ead{burak.karahanoglu@tubitak.gov.tr}

\author[sabanci]{Hakan Erdogan}
\ead{haerdogan@sabanciuniv.edu}

\address[bte]{TUBITAK BILGEM, Kocaeli 41470, Turkey}
\address[sabanci]{Department of Electronics Engineering, Sabanci University, Istanbul 34956, Turkey}

\begin{abstract}
Best-first search has been recently utilized for compressed sensing (CS) by the A$^\star$ orthogonal matching pursuit (\AOMP) algorithm. 
In this work, we concentrate on theoretical and empirical analyses of \AOMP. 
We present a restricted isometry property (RIP) based general condition for exact recovery of sparse signals via \AOMP. 
In addition, we develop online guarantees which promise improved recovery performance with the residue-based termination instead of the sparsity-based one. 
We demonstrate the recovery capabilities of {\AOMP} with extensive recovery simulations using the adaptive-multiplicative (AMul) cost model, which effectively compensates for the path length differences in the search tree. 
The presented results, involving phase transitions for different nonzero element distributions as well as recovery rates and average error, reveal not only the superior recovery accuracy of \AOMP, but also the improvements with the residue-based termination and the AMul cost model. 
Comparison of the run times indicate the speed up by the AMul cost model. 
We also demonstrate a hybrid of OMP and {\AOMP} to accelerate the search further.  
Finally, we run {\AOMP} on a sparse image to illustrate its recovery performance for  more realistic coefficient distributions. 
\end{abstract}
\begin{keyword}
compressed sensing, A$^\star$ orthogonal matching pursuit, restricted isometry property, adaptive-multiplicative cost model
\end{keyword}

\end{frontmatter}

\section{Introduction}
\label{sec:Intro}

\blfootnote{Appeared in Signal Processing, Volume 118, Pages 62-74, January 2016.}

A$^\star$ orthogonal matching pursuit (\AOMP) \cite{Karahanoglu:AOMPfull} aims at combination of best-first tree search
with the orthogonal matching pursuit (OMP) algorithm \cite{Pati:OMP} for the compressed sensing problem.
It incorporates the A$^\star$ search technique \cite{Hart:FBHDMCP, Jelinek:SMSP} to increase the efficiency of the search over the tree representing the hypotheses space consisting of sparse candidates.
Dedicated cost models have been proposed in order to make the search intelligently guided and tractable.
The empirical investigation in \cite{Karahanoglu:AOMPfull} implies significant recovery improvements over conventional compressed sensing methods.
The recently introduced adaptive-multiplicative cost model has led to further speed and accuracy improvements in the preliminary empirical findings of \cite{AOMP_EUSIPCO}.

This paper addresses a detailed theoretical and experimental study of A*OMP algorithm particularly when the number of elements in the sparse solution are not restricted to the sparsity level $K$ of the underlying signal.
This is obtained by enforcing a residue-based termination criterion.
We analyse the theoretical performance of the algorithm using the restricted isometry property.
The analyses cover two possible cases for the length of the returned solution, namely when it is restricted to $K$ nonzero elements, and when more than $K$ elements are allowed.
In addition, the impacts of the adaptive-multiplicative cost model and the residue-based termination criterion on the recovery speed and accuracy are evaluated via comprehensive simulations involving different signal statistics, phase transitions and images in comparison to conventional recovery algorithms.

\subsection{Compressed Sensing}
\label{sec:CS}

The fundamental goal of compressed sensing (CS) is to unify data acquisition and compression by observing a lower dimensional vector
$\y=\mathbf{\Phi}\x$
instead of the signal $\x$, where  $\mathbf{\Phi}\in{\mathbb{R}}^{M\times{N}}$ is the (generally random) measurement matrix\footnote{A more general model involves a structured dictionary $\mathbf{\Psi}$ for sparse representation of $\x$, i.e.,  $\x = \mathbf{\Psi}\mathbf{z}$ where $\mathbf{z}$ is sparse and $\x$ is not. In this more general case, observation model can be written as $\y = \mathbf{\Phi}\mathbf{\Psi}\mathbf{z}$. For simplicity, we omit $\mathbf{\Psi}$ and treat $\x$ as sparse.},
$\y\in{\mathbb{R}}^{M}$, and $\x\in{\mathbb{R}}^{N}$.
The dimensionality reduction follows $M<N$, as a result of which $\x$ cannot be directly solved back from $\y$. Alternatively, assuming $\x$ is $K$-sparse (i.e., it has at most $K$ nonzero components), or compressible, $\x$ can be recovered under certain conditions by solving
\begin{equation}\label{Eq:L0Minimization}
\min\|\x\|_{0} \subto \y=\mathbf{\Phi}\x
\end{equation}
where $\|\x\|_0$ denotes the number of nonzero elements in $\x$.

As the direct solution of (\ref{Eq:L0Minimization}) is intractable, approximate solutions have emerged in the CS literature.
Convex optimization algorithms \cite{Chen:BP, Candes:DecLP, Donoho:CS, Tropp2006589} relax (\ref{Eq:L0Minimization}) by replacing $\|\x\|_{0}$ with its closest convex approximation $\|\x\|_{1}$.
Greedy algorithms \cite{Pati:OMP, Dai:SP, Blumensath:IHT2, FBP_DSP, Sundman2014298, Tropp:SOMP2} provide simple and approximate solutions via iterative residue minimization. Other recovery schemes include Bayesian methods \cite{Ji2008, Babacan2010}, nonconvex approaches \cite{Mohimani:SL0, Mohammadi201442, Montefusco20132636, Ince2013338}, iterative reweighted methods \cite{Wang:ISD, Candes:EncSparREw,Fang2014201}, etc.

\subsection{A$^\star$ Orthogonal Matching Pursuit}
\label{sec:AStar}

OMP is among the most acknowledged greedy algorithms for sparse recovery \cite{Tropp:OMP, Davenport:AnalysisOMP, Wang:OMP_analysis}. It aims at iterative detection of the support, i.e., the set of indices corresponding to nonzero coefficients, of $\x$. At each iteration, OMP identifies the best match to the residue of $\y$ among the atoms, i.e., columns of $\mathbf{\Phi}$, by choosing the index of the atom with maximum correlation. That is, OMP is structurally based on continuous expansion of a single hypothesis, represented by a single iteratively-expanded support estimate, or path.

On the other hand, simultaneous evaluation of multiple hypotheses for the sparse support may improve recovery over the single-path algorithms like OMP. Multiple hypotheses may be represented by a search tree, and the recovery problem can be efficiently solved by sophisticated best-first search techniques. To promote this idea, the authors have introduced the {\AOMP} algorithm \cite{Karahanoglu:AOMPfull, AOMP_EUSIPCO} which is an iterative semi-greedy approach that utilizes A$^\star$ search on a multiple hypotheses search tree in order to find an approximation of (\ref{Eq:L0Minimization}). The nodes of the tree contain indices of the selected atoms, and the paths represent the candidate support sets for $\x$. \figurename~\ref{fig:OMP_AStar} illustrates the evaluation of such a tree in comparison to OMP. At each iteration, {\AOMP} first selects the best path with the minimum cost criterion which depends on the $\ell_2$ norm of the path residue. Then, the best path is expanded by exploring $B$ of its child nodes with maximum correlation to the path residue. That is, $B$ new candidate support sets are appended to the tree. The filled nodes in \figurename~\ref{fig:OMP_AStar} indicate the child nodes explored per step, which are referred to as $\Delta \mathcal{T}$ in the rest.

In \figurename~\ref{fig:OMP_AStar}, although the OMP solution is among the hypotheses in the tree, {\AOMP} returns a different support set. In fact, if OMP were successful, {\AOMP} would also return the same solution. This typical example of OMP failure, where the best-first search identifies the true solution by simultaneous evaluation of multiple hypotheses, illustrates how the multiple path strategy improves the recovery.

Despite this simple illustration, combining the A$^\star$ search with OMP is not straightforward. It necessitates properly defined cost models which enable the A$^\star$ search to perform the stage-wise residue minimization in an intelligent manner, and effective pruning techniques which make the algorithm tractable. Various structures are introduced in \cite{Karahanoglu:AOMPfull} and \cite{AOMP_EUSIPCO} for the cost model, which is vital for the comparison of paths with different lengths. Pruning strategies, which enable a complexity-accuracy trade-off together with the cost model, are detailed in \cite{Karahanoglu:AOMPfull}. Below, we provide a summary of \AOMP, and refer the interested reader to \cite{Karahanoglu:AOMPfull} and \cite{AOMP_EUSIPCO} for the details.

\begin{figure*}[!t]
  \centering
\centerline{\includegraphics[width = 0.7\linewidth]{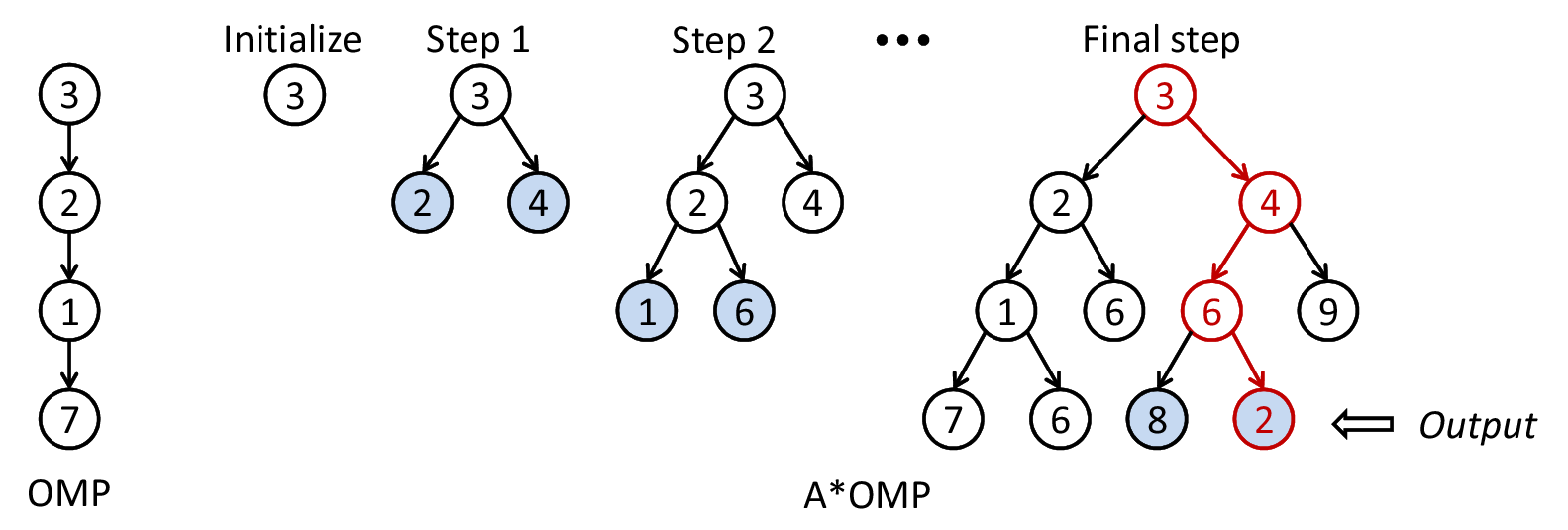}}
\caption{OMP vs. evaluation of {\AOMP} search tree.}
\label{fig:OMP_AStar}
\end{figure*}

\subsubsection{Notation}
\label{sec:Notation}

Before the summary, we clarify the notation in this paper. We define $\mathcal{S}$ as the set of all paths in the search tree. $\mathcal{T}$ is the true support of $\x$.
$\mathcal{T}^i$, $\res^i$, $l^i$ and $f(\mathcal{T}^i)$ denote the support estimate, residue, length and cost of the $i$th path, respectively.
$\hat{\x}^i$ is the estimate of $\x$ given by the $i$th path.
The best path at a certain step is referred to as $b$.
As mentioned above, $\Delta \mathcal{T}$ represents the set of indices selected during the expansion of $b$, i.e., the indices of the $B$ largest magnitude elements in $\mathbf{\Phi}^*\res^b$, where $\mathbf{\Phi}^*$ denotes the conjugate of $\mathbf{\Phi}$.
$\bphi_j$ is the $j$th column of $\mathbf{\Phi}$.
$\mathbf{\Phi}_{\mathcal{J}}$ denotes the matrix composed of the columns of $\mathbf{\Phi}$ indexed by the set $\mathcal{J}$. Similarly, $\x_{\mathcal{J}}$ is the vector of the elements of $\x$ indexed by $\mathcal{J}$.
$\Kmax$ is the maximum number of allowable nodes along a path in the {\AOMP} search tree.
We say that path $i$ \textit{complete} if $l^i = \Kmax$.

\subsubsection{Brief Overview of \AOMP}
\label{sec:AStarBrief}

{\AOMP} initializes the search tree with $I$ paths of a single node each. These nodes represent the indices of the $I$ largest magnitude elements in  $\mathbf{\Phi}^*\y$.
At each iteration, the algorithm first selects the best path $b$ among the incomplete paths in the search tree with minimum cost.
Then, $\Delta \mathcal{T}$ is chosen as the indices of the $B$ largest magnitude elements in $\mathbf{\Phi}^*\res^b$.
This implies $B$ candidate paths, each of which expands $b$ with a single index in $\Delta \mathcal{T}$.
Each candidate path is opened unless an equivalent path has been explored before (\textit{equivalent path pruning} \cite{Karahanoglu:AOMPfull}).
For each new path $i$, $\res^i$ is given by the projection error of $\y$ onto $\mathbf{\Phi}_{\mathcal{T}_i}$, and the cost $f(\mathcal{T}_i)$ is computed as a function of $\res^i$. Finally, all but the best $P$ paths with minimum cost are pruned  (\textit{tree size pruning} \cite{Karahanoglu:AOMPfull}).
Selection and expansion of the best path are repeated until either some path $i$ satisfies $\|\res^i\|_2 \leq \varepsilon\|\y\|_2$, or all $P$ paths are complete. The pseudo-code for {\AOMP} is given in Algorithm~\ref{alg:AOMP}.

\begin{algorithm}[!t]
\caption{A$^\star$ ORTHOGONAL MATCHING PURSUIT}
\label{alg:AOMP}
\small
\begin{algorithmic}[1]
\State \textbf{Input:} $\mathbf{\Phi}$, $\y$
\State \textbf{Define:} $P$,  $I$, $B$, $\Kmax$, $\varepsilon$, $\{\alpha_{\text{Mul}}\; \text{or} \; \alpha_{\text{AMul}}\}$
\State \textbf{Initialize:} $\mathcal{T}^i = \emptyset$, $r^i = \mathbf{y}$ $\forall{i=1,2,...,P}$, $b =\{1\}$
\State $\Delta \mathcal{T} = \argmax\limits_{\mathcal{J}, |\mathcal{J}|=I} \sum_{j \in \mathcal{J}}|\langle\bphi_j,\y\rangle|$
\For{i}{1}{I}  \Comment{$I$ paths of length $1$}
    \State $\mathcal{T}^i = \{\text{$i$th index in $\Delta \mathcal{T}$}\}$, $\res^i = \y -  \langle\y,\bphi_{\mathcal{T}^i}\rangle\bphi_{\mathcal{T}^i}$
\EndFor
\While{$b \neq \emptyset$}
    \State $\Delta \mathcal{T} = \argmax\limits_{\mathcal{J}, |\mathcal{J}|=B} \sum_{j \in \mathcal{J}}|\langle\bphi_j,\res^b\rangle|$   \Comment{$B$ children of $b$}
    \State $\widetilde{\mathcal{T}} = \mathcal{T}^b$
    \State $p = b$   \Comment{first to be replaced}
    \For{i}{1}{B} \Comment{expansion}
        \State $\widehat{\mathcal{T}} =\widetilde{\mathcal{T}} \cup \{\text{$i$th index in $\Delta \mathcal{T}$}\}$ \Comment{candidate path}
        \State $\mathbf{z} = \argmin\limits_{\hat{\mathbf{z}}} \| \y- \mathbf{\Phi}_{\widehat{\mathcal{T}}} \hat{\mathbf{z}} \|_{2}$ \Comment{orthogonal projection}
        \State $\hat{\res} = \y - \mathbf{\Phi}_{\widehat{\mathcal{T}}} \mathbf{z}$ \Comment{update residue}
        \If{$(\|\hat{\res}\|_2 \leq \varepsilon\|\y\|_2)$} \Comment{check residue}
            \State \textbf{return} $\widehat{\mathcal{T}}$ \Comment{terminate}
        \EndIf
        \If{$f(\widehat{\mathcal{T}}) < f(\mathcal{T}^p)$ and $(\widehat{\mathcal{T}} \notin \mathcal{S})$} \Comment{pruning}
            \State $\mathcal{T}^p = \widehat{\mathcal{T}}$, $\res^p = \hat{\res}$
        \EndIf
        \State $p = \argmax\limits_{i \in 1,2,...,P} f(\mathcal{T}^i)$ \Comment{worst path(replaced next)}
    \EndFor
    \State $b = \argmin\limits_{i \in 1,2,...,P,\;l^i<\Kmax} f(\mathcal{T}^i)$   \Comment{best incomplete path}
\EndWhile
\State $b = \argmin\limits_{i \in 1,2,...,P} f(\mathcal{T}^i)$  \Comment{best (complete) path}
\State \textbf{return} $\mathcal{T}^b$ 
\end{algorithmic}
\end{algorithm}

\subsubsection{Termination Criteria}
\label{sec:TerCriteria}

{\setlength{\parskip}{0cm}
The outline presented above is slightly different from the introduction of {\AOMP} in \cite{Karahanoglu:AOMPfull} and \cite{AOMP_EUSIPCO}.
The best path selection and termination mechanisms of {\AOMP} are modified in order to improve the theoretical guarantees of the algorithm.
In \cite{Karahanoglu:AOMPfull} and \cite{AOMP_EUSIPCO}, the best path is selected among \emph{all} paths in the tree.
Accordingly, the search terminates either when

\hspace{4mm}(\emph{i}) the best path is complete ($l^b==\Kmax$), or

\hspace{4mm}(\emph{ii}) the residue is small enough ($||\res||_2<\epsilon||\y||_2$).

\noindent Actually, (\emph{i}) appears here as a consequence of the best path choice involving complete paths in addition to the incomplete ones.
Since the best path is chosen using the cost function, this termination criterion depends on the cost model.
To guarantee exact recovery with this scheme, the cost model should exhibit some sense of optimality, i.e., it should assign potentially true paths lower costs than false complete paths.
This is necessary to ensure that some false complete path does not become the best path.
However, optimality of the cost model is analytically hard to guarantee.
This becomes an obstacle for obtaining theoretical guarantees independent of the cost models.}

{\setlength{\parskip}{0cm}
To overcome this problem, the dependency of the termination on the cost function, i.e., termination criterion (\emph{i}), should be removed.
For this purpose, we modify {\AOMP} as follows:

\hspace{4mm}(\emph{i}) The best path is selected among the \emph{incomplete} paths in the tree (Algorithm \ref{alg:AOMP}, line 24).

\hspace{4mm}(\emph{ii}) The search terminates when the residue is small enough (Algorithm \ref{alg:AOMP}, line 16-17).

\noindent To ensure termination, the best complete path is returned as the solution only when all paths are complete, but none of them satisfies the termination criterion on the residue\footnote{Note that this already indicates a recovery failure.} (Algorithm \ref{alg:AOMP}, line 26-27).
These modifications are listed in Table~\ref{Table:TerCrit}.
In this structure, termination does not directly rely on the cost model, hence stronger exact recovery guarantees may be obtained.
The results in this paper are obtained with the modified version of the algorithm.
Note that, based on the authors' experience, this modification does not have a significant effect on the empirical performance of the algorithm. Yet, it is critical for the theoretical analysis.
The rest of this manuscript concentrates on the modified version of the algorithm without explicit referral.}

\begin{table}[!t]
\caption{Comparison of {\AOMP} mechanisms}
\centering
\setlength\extrarowheight{2pt}
{\footnotesize{
\begin{tabular}{m{0.18\linewidth} | m{0.33\linewidth} | m{0.33\linewidth} }  \hline \hline
& Initial version \cite{Karahanoglu:AOMPfull, AOMP_EUSIPCO}   	& Modified version	\tabularnewline \hline
Best path

selection & among \emph{all} paths & among \emph{incomplete}

 paths	\tabularnewline \hline
Termination Criteria&
--   The best path is complete.

--   The residue is small enough. & The residue is small

enough.	\tabularnewline \hline\hline
\end{tabular}
}}
\label{Table:TerCrit}
\end{table}

The termination parameters in Algorithm~\ref{alg:AOMP}, $\Kmax$ and  $\varepsilon$, can be adjusted for different termination behaviour.
 In \cite{Karahanoglu:AOMPfull}, each path is limited to $K$ nodes, i.e., $\Kmax=K$.
We call this sparsity-based termination, and denote by {\AOMPK}.
Another alternative is the residue-based termination such as in \cite{AOMP_EUSIPCO}, where more than $K$ nodes are allowed along a path by setting $\Kmax>K$ and $\varepsilon$ is selected small enough based on the noise level.
This version is referred to as {\AOMPe}.
Preliminary results in \cite{AOMP_EUSIPCO} indicate that {\AOMPe} yields not only better recovery but also faster termination than {\AOMPK}.
With the flexibility on choosing $\Kmax$ and $\varepsilon$, we apply both termination criteria in Algorithm~\ref{alg:AOMP}.

\subsubsection{Cost Models}
\label{sec:CostModels}

To select the best path, {\AOMP} should compare the costs of paths with different lengths. This necessitates proper cost models which can compensate for the differences in path lengths. Some novel models have been proposed in \cite{Karahanoglu:AOMPfull} and \cite{AOMP_EUSIPCO}. In this work, we employ the multiplicative (Mul) \cite{Karahanoglu:AOMPfull} and adaptive-multiplicative (AMul) \cite{AOMP_EUSIPCO} models, following their superior recovery capabilities demonstrated in \cite{Karahanoglu:AOMPfull} and \cite{AOMP_EUSIPCO}.

The Mul cost model relies on the expectation that unexplored nodes decrease $\left\| \res^i \right\|_2$ by a constant rate $\alpha_{\text{Mul}} \in (0,1)$:
\begin{equation}
    f_{\text{Mul}}(\mathcal{T}^i) = \alpha_{\text{Mul}}^{\Kmax-l^i} \left\| \res^i \right\|_2. \nonumber
\end{equation}
Note that, we replace $K$ in \cite{Karahanoglu:AOMPfull} with $\Kmax$ to allow for different termination criteria. \cite{Karahanoglu:AOMPfull} demonstrates that decreasing $\alpha_{\text{Mul}}$ improves recovery accuracy, while the search gets slower.

The AMul model is a dynamic extension of the Mul model:
\begin{equation} \label{Eq:AMul_CM}
    f_{\text{AMul}}(\mathcal{T}^i) = \left(\alpha_{\text{AMul}} \frac{\left\| \res^i_{l^i} \right\|_2}{\left\| \res^i_{l^i-1} \right\|_2}\right)^{\Kmax-l^i} \left\| \res^i_{l^i} \right\|_2
\end{equation}
where $\res^i_{l}$ denotes the residue after the first $l$ nodes of the path $i$, and $\alpha_{\text{AMul}}\in (0,1]$ is the cost model parameter.

The AMul cost model relies on the following assumption: each unexplored node would reduce $\left\| \res^i \right\|_2$ by a rate proportional to the decay occurred during the last expansion of the path $i$. This rate is modeled by the auxiliary term $\alpha_{\text{AMul}} \left\| \res^i_{l^i} \right\|_2\mydiv\left\| \res^i_{l^i-1} \right\|_2$, and the exponent $\Kmax-l^i$ extends this to all unexplored nodes along path $i$.
The motivation is intuitive: since the search is expected to select nodes with descending correlation to $\y$, a node is expected to reduce $\left\| \res^i \right\|_2$ less than its ancestors do.
Note that this condition may be violated for a particular node. However, the auxiliary term is mostly computed over a number of nodes instead of a single one. Hence, it is practically sufficient if this assumption holds for groups of nodes.
Moreover, the tree usually contains multiple paths which may lead to the correct solution.
The fact that some of these paths violate this assumption does not actually harm the recovery.
This behavior is similar to the other cost models in \cite{Karahanoglu:AOMPfull}. The empirical results in  Section~\ref{sec:results} indicate that these cost models are useful in practice.

The adaptive structure of the AMul model allows for a larger $\alpha$ than the Mul model.
This reduces the auxiliary term. Consequently, the search favors longer paths, explores fewer nodes and terminates faster as demonstrated in Section~\ref{sec:results}.

In the rest, we identify the cost model employed by {\AOMP} with an appropriate prefix. AMul-{\AOMP} and Mul-{\AOMP} denote the use of AMul and Mul cost models, respectively.

\subsubsection{Relations of {\AOMP} to Recent Proposals}
\label{sec:Relations}

One of the first and trivial combinations of the tree search with matching pursuit type algorithms has been suggested in \cite{Cotter:TSBOMP} in 2001.
Two strategies have been considered for exploring a tree with branching factor $K$, that is where each node has $K$ children only\footnote{In this context, $K$ is not related to the sparsity level of $\x$ as before. We denote the branching factor as $K$ in order to be consistent with \cite{Cotter:TSBOMP}.}.
MP:K has a depth-first nature, that is the candidate paths are explored one by one.
The algorithm first explores a complete path up to the maximum depth.
If this path does not yield the desired solution, the tree is backtracked and  other candidates are explored sequentially until the solution is found.
The other variant, MP:M-L, is based on breadth-first search.
It processes all leaf nodes at a certain depth at once by exploring $K$ children of each leaf and keeps the best $M$ among the new candidates.
The process is repeated until tree depth becomes $L$, and the path with the lowest residual is returned.

This idea has recently been revisited in \cite{Kwon:MMP}, where the algorithm is referred to as multipath matching pursuit (MMP).
As in \cite{Cotter:TSBOMP}, breadth-first (MMP-BF) and depth-first (MMP-DF) strategies have been evaluated to explore a search tree with branching factor $L$.
For tractability, MMP-DF sets a limit on the number of sequentially explored paths.
As a novel contribution, \cite{Kwon:MMP} provides RIP-based theoretical guarantees for MMP. Note that these guarantees are applicable when the number of new paths per level or the number of explored paths are not limited, i.e., when no pruning is applied.

Though both ideas are based on exploring a search tree, these algorithms are fundamentally different than {\AOMP}, where the tree search is guided by adaptive cost models  in an intelligent manner.  Instead, MMP-BF and MMP-DF are rather unsophisticated techniques  where tree search follows a predefined order.
We compare {\AOMP} and MMP via recovery simulations in Section~\ref{sec:results}. MMP-DF is chosen among the two variants, since it is referred to as the practical one in \cite{Kwon:MMP}.

\subsection{Outline and Contributions}
\label{sec:Outline}

The manuscript at hand concentrates on detailed analyses of the sparse signal recovery performance of \AOMP. Particularly, we concentrate on the variant AMul-{\AOMPe} which extends the general form in \cite{Karahanoglu:AOMPfull} by the novel AMul cost model from \cite{AOMP_EUSIPCO} and residue-based termination. We present new theoretical and empirical results to demonstrate the superiority of this variant not only over the {\AOMP} variants in \cite{Karahanoglu:AOMPfull}, but also over some conventional sparse recovery methods. Note that AMul-{\AOMPe} has only been preliminarily tested in \cite{AOMP_EUSIPCO} by a set of limited simulations, which are far away from providing enough evidence to generalize its performance. This manuscript presents a detailed empirical investigation of AMul-{\AOMPe}, without which the performance analyses would not be complete.
These simulations significantly enrich the findings of \cite{AOMP_EUSIPCO} by previously unpublished results which include phase transition comparisons for different signal statistics, demonstration on an image, a faster hybrid approach and optimality analyses. The results reveal not only the superior recovery accuracy of AMul-{\AOMPe}, but also the improvements in the speed of the algorithm due to the residue-based termination and the AMul cost model.

On the other hand, our theoretical findings not only include RIP-based exact recovery guarantees for exact recovery of sparse signals via {\AOMPK} and {\AOMPe}, but also provide means for comparison of different termination criteria.
The former states RIP conditions for the exact recovery of sparse signals from noise-free measurements, while the latter addresses the recovery improvements when the residue-based termination is employed instead of the sparsity-based one.
For the analyses, we employ a method similar to the analyses of the OMP algorithm in \cite{Wang:OMP_analysis} and  \cite{Karahanoglu:OMPe}.
In Section~\ref{Sec:SingIter}, we develop a RIP condition for the success of a single {\AOMP} iteration, which forms a basis for the following theoretical analyses.
In Section~\ref{Sec:Analysis_AOMPK}, we derive a general recovery condition for exact recovery via {\AOMPK}.
As intuitively expected, this condition is less restrictive than the $K$-step OMP recovery condition \cite{Wang:OMP_analysis}.
Section~\ref{Sec:Analysis_AOMPe} presents a very similar general condition for exact recovery via {\AOMPe}.
In addition, we establish an online recovery condition for exact recovery of a signal with {\AOMPe}.
Section~\ref{Sec:Analysis_Comp} compares the general and online recovery conditions, clarifying that the latter is less restrictive. This suggests that {\AOMPe} possesses stronger recovery capabilities than {\AOMPK}.

Section~\ref{sec:results} compares the recovery accuracy of AMul-{\AOMPe} to other {\AOMP} variants, basis pursuit (BP) \cite{Chen:BP}, subspace pursuit (SP) \cite{Dai:SP}, OMP \cite{Pati:OMP}, iterative hard thresholding (IHT) \cite{Blumensath:IHT2}, iterative support detection (ISD) \cite{Wang:ISD}, smoothed $\ell_0$ (SL0) \cite{Mohimani:SL0}, MMP \cite{Kwon:MMP}, and forward-backward pursuit (FBP) \cite{FBP_DSP}.
The main contribution is the phase transitions which are obtained by computationally expensive experiments for different signal types.
These generalize the strong recovery capability of AMul-{\AOMPe} over a wide range of $N$, $M$ and $K$.
We investigate the recovery rates and average recovery error as well.
Run times illustrate the acceleration with the AMul cost model and the residue-based termination.
An hybrid of {\AOMPe} and OMP demonstrates how the recovery speed can be improved without losing the accuracy.
The sparse image recovery problem represents a more realistic coefficient distribution than the other artificial examples.

\section{Theoretical Analysis of \AOMP}
\label{Sec:Analysis}

In this section, we  develop theoretical guarantees  for signal recovery with {\AOMP}. We first visit the restricted isometry property and then provide some related preliminary lemmas. Then, we concentrate recovery with {\AOMP}.

\subsection{Restricted Isometry Property}
Restricted isometry property (RIP) \cite{Candes:DecLP} provides an important means for theoretical guarantees in sparse recovery  problems.
A matrix $\mathbf{\Phi}$ is said to satisfy the $L$-RIP if there exists a restricted isometry constant (RIC) $\delta_L \in (0,1)$ satisfying
\begin{equation}\label{Eq:RIP}
    (1-\delta_L)\|\x\|_2^2 \leq \|\mathbf{\Phi}\x\|_2^2 \leq (1+\delta_L)\|\x\|_2^2, \:\: \forall \x{:}\|\x\|_0 \leq L. \nonumber
\end{equation}

Some random matrices, such as Gaussian or Bernoulli matrices, satisfy the $L$-RIP with high probabilities if $L$, $M$ and $N$ satisfy some specific conditions \cite{Rudelson:SparseRec, Candes:NOptRec}. Exploiting this property, RIP has been utilized to obtain recovery guarantees for sparse recovery algorithms \cite{Candes:DecLP, Candes:NOptRec, Karahanoglu:OMPe, Xu20132653, Wang2014188}.

\subsection{Preliminaries}

We now present some preliminary lemmas based on RIP:

\begin{lemma}[Monotonicity of the RIC, \cite{Dai:SP}]\label{lemma_monotonicity}
Let $R$ and $S$ be positive integers such that $R > S$. Then,
$\delta_R \geq \delta_S$.
\end{lemma}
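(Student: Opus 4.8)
The statement to prove is Lemma~\ref{lemma_monotonicity}: for positive integers $R > S$, we have $\delta_R \geq \delta_S$.

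My plan is to show that any $S$-sparse vector is also $R$-sparse (since $\|\x\|_0 \leq S \leq R$), so the $R$-RIP inequality applies to it. Concretely, fix any $\x$ with $\|\x\|_0 \leq S$. First I would note that $\delta_S$ is, by definition, the \emph{smallest} constant for which the $S$-RIP inequality holds over all such $\x$; equivalently $\delta_S = \sup_{\|\x\|_0 \leq S,\, \x \neq 0} \bigl| \|\mathbf{\Phi}\x\|_2^2 / \|\x\|_2^2 - 1 \bigr|$. The same characterization holds for $\delta_R$ with the sup taken over the larger set $\{\x : \|\x\|_0 \leq R\}$. Since $\{\x : \|\x\|_0 \leq S\} \subseteq \{\x : \|\x\|_0 \leq R\}$, the supremum defining $\delta_R$ is over a superset, hence $\delta_R \geq \delta_S$.

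If instead one wants to avoid invoking the "smallest constant" characterization and argue directly from the inequality form given in the excerpt, the argument is: suppose $\mathbf{\Phi}$ satisfies $R$-RIP with constant $\delta_R$. Then for every $\x$ with $\|\x\|_0 \leq R$, and in particular for every $\x$ with $\|\x\|_0 \leq S$, the chain $(1-\delta_R)\|\x\|_2^2 \leq \|\mathbf{\Phi}\x\|_2^2 \leq (1+\delta_R)\|\x\|_2^2$ holds. This shows $\mathbf{\Phi}$ satisfies $S$-RIP with constant $\delta_R$. Since $\delta_S$ is the minimal such constant (or: since the set of valid constants for $S$-RIP contains $\delta_R$ and is an interval of the form $[\delta_S, 1)$), we conclude $\delta_S \leq \delta_R$.

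I do not anticipate a genuine obstacle here — the lemma is essentially a definitional monotonicity fact and the proof is a one-line set-inclusion argument. The only subtlety worth stating carefully is the convention that $\delta_L$ denotes the \emph{optimal} (smallest) restricted isometry constant of order $L$, rather than just \emph{some} constant for which $L$-RIP holds; the inequality $\delta_R \geq \delta_S$ is false if one reads $\delta_L$ as an arbitrary valid constant. So I would make that convention explicit at the start and then the rest follows immediately from $\{\x : \|\x\|_0 \leq S\} \subseteq \{\x : \|\x\|_0 \leq R\}$.
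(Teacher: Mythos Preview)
Your argument is correct and is the standard one: the restricted isometry constant $\delta_L$ is by convention the smallest constant for which the $L$-RIP inequality holds (equivalently, $\delta_L = \sup_{0 < \|\x\|_0 \le L} \bigl|\,\|\mathbf{\Phi}\x\|_2^2/\|\x\|_2^2 - 1\,\bigr|$), and since $\{\x:\|\x\|_0\le S\}\subseteq\{\x:\|\x\|_0\le R\}$ the supremum over the larger set dominates. The paper does not supply its own proof of this lemma --- it is quoted from \cite{Dai:SP} --- so there is no alternative approach to compare against; your remark about making the ``smallest constant'' convention explicit is the only point worth recording.
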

\begin{lemma}[Lemma 2, \cite{Wang:OMP_analysis}] \label{lemma_SP1}
Let $\mathcal{I} \subset \{1,2,\cdots,N\}$ and $|\mathcal{I}|$ denote the cardinality of $\mathcal{I}$.
For any arbitrary vector $\mathbf{z} \in \mathbb{R}^{|\mathcal{I}|}$, RIP directly leads to
\begin{equation}
(1-\delta_{|\mathcal{I}|})\|\mathbf{z}\|_2 \leq \|\mathbf{\Phi}_{\mathcal{I}}^*\mathbf{\Phi}_{\mathcal{I}}\mathbf{z}\|_2 \leq (1+\delta_{|\mathcal{I}|})\|\mathbf{z}\|_2. \nonumber
\end{equation}

\end{lemma}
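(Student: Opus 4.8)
Lemma~\ref{lemma_SP1} is a standard consequence of RIP, and the plan is to prove it directly from the definition. The key observation is that $\mathbf{\Phi}_{\mathcal{I}}^*\mathbf{\Phi}_{\mathcal{I}}$ is a symmetric positive semidefinite $|\mathcal{I}|\times|\mathcal{I}|$ matrix, so its operator norm and smallest eigenvalue are controlled by the Rayleigh quotient $\mathbf{z}^*\mathbf{\Phi}_{\mathcal{I}}^*\mathbf{\Phi}_{\mathcal{I}}\mathbf{z} = \|\mathbf{\Phi}_{\mathcal{I}}\mathbf{z}\|_2^2$. Note that for any $\mathbf{z}\in\mathbb{R}^{|\mathcal{I}|}$, the vector $\mathbf{\Phi}_{\mathcal{I}}\mathbf{z}$ equals $\mathbf{\Phi}\tilde{\mathbf{z}}$ where $\tilde{\mathbf{z}}\in\mathbb{R}^N$ is the zero-padded extension of $\mathbf{z}$ supported on $\mathcal{I}$, and $\|\tilde{\mathbf{z}}\|_0 \leq |\mathcal{I}|$, $\|\tilde{\mathbf{z}}\|_2 = \|\mathbf{z}\|_2$. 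Hence the $|\mathcal{I}|$-RIP applies and gives
\begin{equation}
(1-\delta_{|\mathcal{I}|})\|\mathbf{z}\|_2^2 \leq \|\mathbf{\Phi}_{\mathcal{I}}\mathbf{z}\|_2^2 \leq (1+\delta_{|\mathcal{I}|})\|\mathbf{z}\|_2^2. \nonumber
\end{equation}

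Next I would translate this bound on the quadratic form into a bound on the norm of $\mathbf{\Phi}_{\mathcal{I}}^*\mathbf{\Phi}_{\mathcal{I}}\mathbf{z}$. The cleanest route is eigenvalue-based: the displayed inequality says exactly that every eigenvalue $\lambda$ of the symmetric matrix $\mathbf{\Phi}_{\mathcal{I}}^*\mathbf{\Phi}_{\mathcal{I}}$ lies in $[1-\delta_{|\mathcal{I}|},\,1+\delta_{|\mathcal{I}|}]$. Consequently, writing $\mathbf{z}$ in an orthonormal eigenbasis of $\mathbf{\Phi}_{\mathcal{I}}^*\mathbf{\Phi}_{\mathcal{I}}$ and using Parseval, one gets $(1-\delta_{|\mathcal{I}|})^2\|\mathbf{z}\|_2^2 \leq \|\mathbf{\Phi}_{\mathcal{I}}^*\mathbf{\Phi}_{\mathcal{I}}\mathbf{z}\|_2^2 \leq (1+\delta_{|\mathcal{I}|})^2\|\mathbf{z}\|_2^2$, and taking square roots yields the claimed two-sided inequality. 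Alternatively, one can avoid eigenvalues entirely: the upper bound follows from $\|\mathbf{\Phi}_{\mathcal{I}}^*\mathbf{\Phi}_{\mathcal{I}}\mathbf{z}\|_2 = \sup_{\|\mathbf{u}\|_2=1}\mathbf{u}^*\mathbf{\Phi}_{\mathcal{I}}^*\mathbf{\Phi}_{\mathcal{I}}\mathbf{z} \leq \|\mathbf{\Phi}_{\mathcal{I}}\|^2\|\mathbf{z}\|_2 \leq (1+\delta_{|\mathcal{I}|})\|\mathbf{z}\|_2$, and the lower bound from the fact that $\mathbf{\Phi}_{\mathcal{I}}^*\mathbf{\Phi}_{\mathcal{I}}$ is invertible (since $\delta_{|\mathcal{I}|}<1$ forces its smallest eigenvalue to be positive) with $\|(\mathbf{\Phi}_{\mathcal{I}}^*\mathbf{\Phi}_{\mathcal{I}})^{-1}\| \leq (1-\delta_{|\mathcal{I}|})^{-1}$, so that $\|\mathbf{z}\|_2 = \|(\mathbf{\Phi}_{\mathcal{I}}^*\mathbf{\Phi}_{\mathcal{I}})^{-1}\mathbf{\Phi}_{\mathcal{I}}^*\mathbf{\Phi}_{\mathcal{I}}\mathbf{z}\|_2 \leq (1-\delta_{|\mathcal{I}|})^{-1}\|\mathbf{\Phi}_{\mathcal{I}}^*\mathbf{\Phi}_{\mathcal{I}}\mathbf{z}\|_2$.

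There is essentially no hard step here; the only things to be careful about are the bookkeeping for the zero-padding argument (making sure the sparsity level invoked is exactly $|\mathcal{I}|$ so that $\delta_{|\mathcal{I}|}$ is the right constant) and the fact that $\delta_{|\mathcal{I}|}<1$ is needed to guarantee invertibility for the lower bound. If anything is the ``main obstacle,'' it is purely expository: deciding whether to present the slick eigenvalue argument or the more elementary operator-norm/inverse argument. I would go with the eigenvalue version, since it makes the squaring step transparent and keeps the proof to a few lines.
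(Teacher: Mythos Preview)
Your argument is correct and entirely standard: the RIP inequality bounds the Rayleigh quotient of the symmetric matrix $\mathbf{\Phi}_{\mathcal{I}}^*\mathbf{\Phi}_{\mathcal{I}}$, hence all its eigenvalues lie in $[1-\delta_{|\mathcal{I}|},\,1+\delta_{|\mathcal{I}|}]$, and the two-sided norm bound follows. There is nothing to compare against, however, because the paper does not supply its own proof of this lemma; it is simply quoted from \cite{Wang:OMP_analysis} as a preliminary fact. Your write-up would serve perfectly well as the omitted justification.
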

\begin{lemma}[Lemma 1, \cite{Dai:SP}] Let $\mathcal{I},\mathcal{J} \subset \{1,2,\cdots,N\}$ such that $\mathcal{I} \cap \mathcal{J} = \emptyset$. For any arbitrary vector $\mathbf{z} \in \mathbb{R}^{|\mathcal{J}|}$
\label{lemma_SP2}
\begin{equation}
\|\mathbf{\Phi}_{\mathcal{I}}^*\mathbf{\Phi}_{\mathcal{J}}\mathbf{z}\|_2 \leq \delta_{|\mathcal{I}|+|\mathcal{J}|}\|\mathbf{z}\|_2. \nonumber
\end{equation}
\end{lemma}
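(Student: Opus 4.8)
The plan is to reduce the operator-norm bound to a bound on a single inner product, and then establish that inner-product bound by a polarization argument that crucially exploits the disjointness $\mathcal{I}\cap\mathcal{J}=\emptyset$. First I would use the variational characterization of the Euclidean norm,
\[
\|\mathbf{\Phi}_{\mathcal{I}}^*\mathbf{\Phi}_{\mathcal{J}}\mathbf{z}\|_2 = \max_{\mathbf{a}\in\mathbb{R}^{|\mathcal{I}|},\;\|\mathbf{a}\|_2=1}\ \langle \mathbf{a},\ \mathbf{\Phi}_{\mathcal{I}}^*\mathbf{\Phi}_{\mathcal{J}}\mathbf{z}\rangle = \max_{\|\mathbf{a}\|_2=1}\ \langle \mathbf{\Phi}_{\mathcal{I}}\mathbf{a},\ \mathbf{\Phi}_{\mathcal{J}}\mathbf{z}\rangle,
\]
so it suffices to bound $\langle \mathbf{\Phi}_{\mathcal{I}}\mathbf{a},\,\mathbf{\Phi}_{\mathcal{J}}\mathbf{z}\rangle$ for an arbitrary unit vector $\mathbf{a}$; by homogeneity in $\mathbf{z}$ I may additionally normalise $\|\mathbf{z}\|_2=1$ and reinstate the factor $\|\mathbf{z}\|_2$ at the end.

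Next, because $\mathcal{I}$ and $\mathcal{J}$ are disjoint, the pair $(\mathbf{a},\pm\mathbf{z})$ is the pair of restrictions to $\mathcal{I}$ and $\mathcal{J}$ of a single vector $\mathbf{w}^{\pm}\in\mathbb{R}^{N}$ supported on $\mathcal{I}\cup\mathcal{J}$, with $\|\mathbf{w}^{\pm}\|_0\le |\mathcal{I}|+|\mathcal{J}|$ and, since the supports overlap nowhere, $\|\mathbf{w}^{\pm}\|_2^2 = \|\mathbf{a}\|_2^2+\|\mathbf{z}\|_2^2 = 2$. Expanding $\|\mathbf{\Phi}\mathbf{w}^{\pm}\|_2^2 = \|\mathbf{\Phi}_{\mathcal{I}}\mathbf{a}\|_2^2 + \|\mathbf{\Phi}_{\mathcal{J}}\mathbf{z}\|_2^2 \pm 2\langle \mathbf{\Phi}_{\mathcal{I}}\mathbf{a},\,\mathbf{\Phi}_{\mathcal{J}}\mathbf{z}\rangle$ and subtracting the two identities gives the polarization formula
\[
4\,\langle \mathbf{\Phi}_{\mathcal{I}}\mathbf{a},\ \mathbf{\Phi}_{\mathcal{J}}\mathbf{z}\rangle = \|\mathbf{\Phi}\mathbf{w}^{+}\|_2^2 - \|\mathbf{\Phi}\mathbf{w}^{-}\|_2^2.
\]

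Finally I would apply the $(|\mathcal{I}|+|\mathcal{J}|)$-RIP separately to each term: $\|\mathbf{\Phi}\mathbf{w}^{+}\|_2^2 \le (1+\delta_{|\mathcal{I}|+|\mathcal{J}|})\|\mathbf{w}^{+}\|_2^2 = 2(1+\delta_{|\mathcal{I}|+|\mathcal{J}|})$ and $\|\mathbf{\Phi}\mathbf{w}^{-}\|_2^2 \ge (1-\delta_{|\mathcal{I}|+|\mathcal{J}|})\|\mathbf{w}^{-}\|_2^2 = 2(1-\delta_{|\mathcal{I}|+|\mathcal{J}|})$, so that $\langle \mathbf{\Phi}_{\mathcal{I}}\mathbf{a},\,\mathbf{\Phi}_{\mathcal{J}}\mathbf{z}\rangle \le \delta_{|\mathcal{I}|+|\mathcal{J}|}$; interchanging the roles of $\mathbf{w}^{+}$ and $\mathbf{w}^{-}$ yields the matching lower bound, hence $|\langle \mathbf{\Phi}_{\mathcal{I}}\mathbf{a},\,\mathbf{\Phi}_{\mathcal{J}}\mathbf{z}\rangle| \le \delta_{|\mathcal{I}|+|\mathcal{J}|}$. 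Taking the maximum over unit $\mathbf{a}$ and restoring the $\|\mathbf{z}\|_2$ scaling gives the claimed inequality. There is no real difficulty here; the only step that needs care is the bookkeeping in the second paragraph — it is precisely the disjointness of $\mathcal{I}$ and $\mathcal{J}$ that makes $\mathbf{w}^{\pm}$ an honest $(|\mathcal{I}|+|\mathcal{J}|)$-sparse vector with the clean Pythagorean identity $\|\mathbf{w}^{\pm}\|_2^2=2$, which in turn is what matches the order of the RIC invoked to the subscript appearing in the statement.
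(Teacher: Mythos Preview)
Your argument is correct. The paper does not actually supply a proof of this lemma; it is simply quoted from \cite{Dai:SP} as a known result, so there is no ``paper's own proof'' to compare against. Your polarization argument is in fact the standard proof in the compressed-sensing literature (and essentially the one given in \cite{Dai:SP}): embed $\mathbf{a}$ and $\pm\mathbf{z}$ into a single $(|\mathcal{I}|+|\mathcal{J}|)$-sparse vector, apply the RIP upper and lower bounds, and subtract. All steps are sound, and your remark that disjointness is exactly what makes $\mathbf{w}^{\pm}$ genuinely $(|\mathcal{I}|+|\mathcal{J}|)$-sparse with $\|\mathbf{w}^{\pm}\|_2^2=2$ pinpoints the only place where care is needed.
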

\begin{lemma}
\label{lemma_ExRec2}
Let $K$ and $B$ be positive integers, and $\lceil z \rceil$ denote the smallest integer greater than or equal to $z$. Then,
\begin{equation}
\delta_{K+B} > \frac{\delta_{3\lceil K/ 2 \rceil}}{3}. \nonumber
\end{equation}
\end{lemma}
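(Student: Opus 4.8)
The plan is to reduce the claim to a single ``sublinear growth'' estimate for the restricted isometry constant, prove that estimate by splitting a worst‑case support into three blocks and invoking Lemma~\ref{lemma_SP2} together with the RIP, and finally upgrade the estimate to a strict inequality by analysing when equality can occur.

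Write $m=\lceil K/2\rceil$, so that $3\lceil K/2\rceil=3m$ and $2m\in\{K,K+1\}$. Since $B$ is a positive integer, $K+B\ge K+1\ge 2m$, so Lemma~\ref{lemma_monotonicity} gives $\delta_{K+B}\ge\delta_{2m}$; hence it suffices to establish the strict inequality $\delta_{3m}<3\delta_{2m}$. To obtain the non‑strict version first, I would pick a set $S$ with $|S|\le 3m$ at which $\delta_{3m}$ is attained and partition it into three disjoint blocks $S_1,S_2,S_3$, each of size at most $m$ (possible since $|S|\le 3m$). For any $\x$ supported on $S$, expanding gives
\begin{equation}
\left\|\mathbf{\Phi}_S\x_S\right\|_2^2=\sum_{t=1}^{3}\left\|\mathbf{\Phi}_{S_t}\x_{S_t}\right\|_2^2+\sum_{t\neq u}\langle\mathbf{\Phi}_{S_t}\x_{S_t},\mathbf{\Phi}_{S_u}\x_{S_u}\rangle. \nonumber
\end{equation}
Bounding each diagonal term by $(1\pm\delta_m)\left\|\x_{S_t}\right\|_2^2$ via the RIP (using $\delta_{|S_t|}\le\delta_m$), each cross term in magnitude by $\delta_{2m}\left\|\x_{S_t}\right\|_2\left\|\x_{S_u}\right\|_2$ via Lemma~\ref{lemma_SP2} (as $S_t\cap S_u=\emptyset$ and $|S_t|+|S_u|\le 2m$), and the off‑diagonal sum by the Cauchy--Schwarz inequality $\bigl(\sum_t\left\|\x_{S_t}\right\|_2\bigr)^2\le 3\sum_t\left\|\x_{S_t}\right\|_2^2=3\left\|\x_S\right\|_2^2$, one gets
\begin{equation}
\Bigl|\left\|\mathbf{\Phi}_S\x_S\right\|_2^2-\left\|\x_S\right\|_2^2\Bigr|\le\bigl(\delta_m+2\delta_{2m}\bigr)\left\|\x_S\right\|_2^2, \nonumber
\end{equation}
whence $\delta_{3m}\le\delta_m+2\delta_{2m}\le 3\delta_{2m}$, the last step by $\delta_m\le\delta_{2m}$.

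The remaining, and in my view only delicate, point is strictness. If $\delta_m<\delta_{2m}$, then $\delta_{3m}\le\delta_m+2\delta_{2m}<3\delta_{2m}$ and we are done. Otherwise $\delta_m=\delta_{2m}=:d$, and since in the compressed sensing regime $M<N$ the columns of $\mathbf{\Phi}$ cannot all be pairwise orthonormal, $d\ge\delta_2>0$. Assume for contradiction that $\delta_{3m}=3d$. Then $\delta_{3m}=3d>d=\delta_{2m}$ forces $|S|>2m$, so the three blocks are nonempty, and every inequality in the chain above must be tight for the pair $(S,\x)$ attaining $\delta_{3m}$: Cauchy--Schwarz forces $\left\|\x_{S_1}\right\|_2=\left\|\x_{S_2}\right\|_2=\left\|\x_{S_3}\right\|_2$, the diagonal bounds force $\left\|\mathbf{\Phi}_{S_t}\x_{S_t}\right\|_2^2=(1\pm d)\left\|\x_{S_t}\right\|_2^2$ with one common sign, and the off‑diagonal bounds force $\langle\mathbf{\Phi}_{S_t}\x_{S_t},\mathbf{\Phi}_{S_u}\x_{S_u}\rangle=\pm d\left\|\x_{S_t}\right\|_2\left\|\x_{S_u}\right\|_2$ with that same sign. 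But then the restriction of $\x$ to $S_1\cup S_2$, which has at most $2m$ nonzeros, satisfies $\left\|\mathbf{\Phi}_{S_1\cup S_2}\x_{S_1\cup S_2}\right\|_2^2=(1\pm 2d)\left\|\x_{S_1\cup S_2}\right\|_2^2$, forcing $\delta_{2m}\ge 2d>d$ and contradicting $\delta_{2m}=d$. Hence $\delta_{3m}<3\delta_{2m}$, and combining with $\delta_{K+B}\ge\delta_{2m}$ gives $\delta_{K+B}\ge\delta_{2m}>\tfrac{1}{3}\delta_{3m}=\tfrac{1}{3}\delta_{3\lceil K/2\rceil}$, as required. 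The bulk of the work is the three‑block decomposition, a standard ``linear growth of the RIC'' argument; the care goes into the ceiling bookkeeping and, above all, the equality‑case analysis that turns ``$\ge$'' into the strict ``$>$''.
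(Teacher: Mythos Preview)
Your argument is correct and takes a genuinely different route from the paper's. The paper's proof is two sentences: it invokes Corollary~2 of \cite{Karahanoglu:OMPe} for the case $B=1$ (that is, $\delta_{K+1}>\tfrac{1}{3}\delta_{3\lceil K/2\rceil}$) and then applies Lemma~\ref{lemma_monotonicity} to pass to $B>1$. In other words, the entire substance of the inequality is outsourced to a prior publication.

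You instead give a self-contained proof. After the same monotonicity reduction $\delta_{K+B}\ge\delta_{2m}$ with $m=\lceil K/2\rceil$, you establish $\delta_{3m}<3\delta_{2m}$ directly via a three-block decomposition of an extremal support, bounding diagonal terms by $\delta_m$ and cross terms by $\delta_{2m}$ through Lemma~\ref{lemma_SP2}, and then run an equality-case analysis to upgrade ``$\le$'' to ``$<$''. That last step correctly uses the standing compressed-sensing assumption $M<N$ to guarantee $\delta_{2m}\ge\delta_2>0$; this hypothesis is not in the lemma statement but is implicit throughout the paper, and without it the strict inequality is actually false (take $\mathbf{\Phi}=\mathbf{I}$, where all RIC's vanish). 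What each approach buys: the paper's proof is brief but opaque without the cited reference, while yours is longer but fully transparent and makes explicit where the strictness comes from.
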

\begin{proof}
Corollary 2 of \cite{Karahanoglu:OMPe} states that Lemma~\ref{lemma_ExRec2} holds for $B=1$. By Lemma~\ref{lemma_monotonicity}, $\delta_{K+B} \geq \delta_{K+1}$ for $B>1$. Hence, Lemma~\ref{lemma_ExRec2} also holds for $B>1$.
\end{proof}
\begin{lemma}
\label{lemma_ExRec1}
Assume $K \geq (3+2\sqrt{B})^2$. There exists at least one positive integer $n_c<K$ such that
\begin{equation}
\label{Eq:Lemma_ExRec1_1}
 \frac{3\sqrt{B}}{\sqrt{K}+\sqrt{B}} \leq \frac{\sqrt{B}}{\sqrt{K-n_c}+\sqrt{B}}.
\end{equation}
Moreover, $n_c$ values which satisfy (\ref{Eq:Lemma_ExRec1_1}) are bounded by
\begin{equation}
\label{Eq:n_cBound}
 K > n_c \geq \frac{8K+4\sqrt{BK}-4B}{9}.
\end{equation}
\end{lemma}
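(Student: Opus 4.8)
The plan is to treat \eqref{Eq:Lemma_ExRec1_1} as an algebraic inequality in the unknown $n_c$, solve it explicitly, and then show that the resulting range of admissible $n_c$ is nonempty under the hypothesis $K \geq (3+2\sqrt{B})^2$. First I would cancel the common factor $\sqrt{B}$ on both sides (legitimate since $B$ is a positive integer), reducing \eqref{Eq:Lemma_ExRec1_1} to $3(\sqrt{K-n_c}+\sqrt{B}) \le \sqrt{K}+\sqrt{B}$, i.e. $3\sqrt{K-n_c} \le \sqrt{K} - 2\sqrt{B}$. For this to have a solution at all we need the right-hand side to be nonnegative, which is exactly $\sqrt{K} \ge 2\sqrt{B}$, a consequence of $K \ge (3+2\sqrt{B})^2 > 4B$. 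Squaring both sides (both nonnegative) gives $9(K-n_c) \le K - 4\sqrt{BK} + 4B$, and solving for $n_c$ yields
\begin{equation}
n_c \geq \frac{9K - K + 4\sqrt{BK} - 4B}{9} = \frac{8K + 4\sqrt{BK} - 4B}{9}, \nonumber
\end{equation}
which is precisely the lower bound in \eqref{Eq:n_cBound}. Combined with the a priori requirement $n_c < K$ coming from the statement (and from the fact that $\sqrt{K-n_c}$ must be real and the inequality forces $K-n_c < K$), this establishes the bound \eqref{Eq:n_cBound}; every real number in the half-open interval $[\tfrac{8K+4\sqrt{BK}-4B}{9},\,K)$ satisfies \eqref{Eq:Lemma_ExRec1_1}.

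It then remains to produce an \emph{integer} $n_c$ in this interval, i.e. to show the interval has length strictly greater than $1$ (or, more carefully, that it contains an integer — length $>1$ suffices but is slightly stronger than needed). The length is
\begin{equation}
K - \frac{8K + 4\sqrt{BK} - 4B}{9} = \frac{K - 4\sqrt{BK} + 4B}{9} = \frac{(\sqrt{K} - 2\sqrt{B})^2}{9}. \nonumber
\end{equation}
So I would invoke the hypothesis $K \ge (3+2\sqrt{B})^2$, which gives $\sqrt{K} \ge 3 + 2\sqrt{B}$, hence $\sqrt{K} - 2\sqrt{B} \ge 3$, hence $(\sqrt{K}-2\sqrt{B})^2 \ge 9$ and the interval length is at least $1$. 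A half-open interval $[a,K)$ of length $\ge 1$ with $K$ an integer contains the integer $\lceil a \rceil$ (and $\lceil a\rceil < K$ since $a \le K-1$), so taking $n_c = \big\lceil \tfrac{8K+4\sqrt{BK}-4B}{9}\big\rceil$ furnishes the required integer. This also confirms the claim $n_c < K$.

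The main obstacle — really the only point requiring care — is the endpoint/parity bookkeeping in the last step: the interval is half-open at $K$, so a length of exactly $1$ is enough to guarantee an integer only because the left endpoint $a$ then satisfies $a \le K-1$ and $\lceil a \rceil \le K-1 < K$; one must check that $a = K$ cannot occur (it would force $\sqrt{K}=2\sqrt{B}$, contradicting $\sqrt{K}-2\sqrt{B}\ge 3$). A secondary point is justifying that squaring is an equivalence here rather than merely an implication, which is fine because both $3\sqrt{K-n_c}$ and $\sqrt{K}-2\sqrt{B}$ are nonnegative under the stated hypotheses. Everything else is routine algebra, so I would keep that part terse.
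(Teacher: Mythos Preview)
Your proposal is correct and follows essentially the same approach as the paper: both solve \eqref{Eq:Lemma_ExRec1_1} algebraically for $n_c$ to obtain the lower bound $\tfrac{8K+4\sqrt{BK}-4B}{9}$, and both reduce the existence of an integer $n_c<K$ to the condition $\tfrac{(\sqrt{K}-2\sqrt{B})^2}{9}\ge 1$, which is exactly $K\ge(3+2\sqrt{B})^2$. The only cosmetic difference is that the paper introduces the substitution $s=(K-n_c)/K$ before doing the algebra, whereas you manipulate $n_c$ directly; your endpoint bookkeeping for the half-open interval is in fact a bit more explicit than the paper's.
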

\begin{proof}
We set $K-n_c = sK$ and replace into (\ref{Eq:Lemma_ExRec1_1}):
\begin{equation}
 \frac{3\sqrt{B}}{\sqrt{K}+\sqrt{B}} \leq \frac{\sqrt{B}}{\sqrt{sK}+\sqrt{B}}. \nonumber
\end{equation}
It can trivially be shown that $s$ is bounded by
\begin{equation}
 0 < s \leq \left( \frac{\sqrt{K}-2\sqrt{B}}{3\sqrt{K}} \right)^2. \nonumber
\end{equation}
Then, we obtain the lower bound for $n_c$ as
\begin{equation}
n_c = (1-s)K \geq \frac{8K+4\sqrt{BK}-4B}{9}.
    \label{Eq:Lemma_ExRec1_2}
\end{equation}
Since $n_c<K$, $sK = K-n_c \geq 1$. This translates as
\begin{equation}
 K {}\geq{} \frac{1}{s} {}\geq{}\left(\frac{3\sqrt{K}}{\sqrt{K}-2\sqrt{B}}\right)^2 \nonumber
\end{equation}
from which we deduce the assumption ${K \geq (3+2\sqrt{B})^2}$. Combining this result with (\ref{Eq:Lemma_ExRec1_2}) completes the proof.
\end{proof}

\subsection{Success Condition of an {\AOMP} Iteration}
\label{Sec:SingIter}

We define the success of an {\AOMP} iteration as $\Delta \mathcal{T}$ containing at least one correct index, i.e., $\Delta \mathcal{T} \cap \{\mathcal{T} - \mathcal{T}^b \} \neq \emptyset$.
The following theorem guarantees the success of an iteration:

\begin{theorem}
\label{Thrm_Main}
Let $n_c = |{\mathcal{T}^b \cap \mathcal{T}}|$ and $n_f = |{\mathcal{T}^b - \mathcal{T}}|$. When $b$ is expanded, at least one index in $\Delta \mathcal{T}$ is in the support of $\x$, i.e., ${\Delta \mathcal{T}\cap \{\mathcal{T} - \mathcal{T}^b \} \neq \emptyset}$ if $\mathbf{\Phi}$ satisfies RIP with
\begin{equation}
\delta_{K+n_f+B} < \min\left(\frac{\sqrt{B}}{\sqrt{K-n_c}+\sqrt{B}},\frac{1}{2}\right).
\label{Eq:AOMP_res1}
\end{equation}
\end{theorem}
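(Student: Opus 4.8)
The plan is a proof by contradiction: assume $\Delta\mathcal{T}$ contains no index of $\mathcal{T}-\mathcal{T}^b$ and show this forces $\delta_{K+n_f+B}\geq\sqrt{B}/(\sqrt{K-n_c}+\sqrt{B})$, contradicting (\ref{Eq:AOMP_res1}). Write $\mathcal{R}=\mathcal{T}-\mathcal{T}^b$ (so $|\mathcal{R}|=K-n_c$) and $\mathcal{G}=\mathcal{T}^b\cup\mathcal{T}$, which $\mathcal{T}^b$ and $\mathcal{R}$ partition; hence $|\mathcal{G}|=l^b+|\mathcal{R}|=K+n_f$. I would first record three facts about $\res^b$ that do not use the contradiction hypothesis. (a) Since $\res^b$ is the projection error of $\y$ onto the columns $\mathbf{\Phi}_{\mathcal{T}^b}$, $\mathbf{\Phi}_{\mathcal{T}^b}^*\res^b=0$, so every atom indexed in $\mathcal{T}^b$ is uncorrelated with $\res^b$. (b) Writing $\y=\mathbf{\Phi}_{\mathcal{T}\cap\mathcal{T}^b}\x_{\mathcal{T}\cap\mathcal{T}^b}+\mathbf{\Phi}_{\mathcal{R}}\x_{\mathcal{R}}$ and noting that $\res^b=\y-\mathbf{\Phi}_{\mathcal{T}^b}\mathbf{z}$ then equals $\mathbf{\Phi}_{\mathcal{R}}\x_{\mathcal{R}}$ plus a $\mathbf{\Phi}_{\mathcal{T}^b}$-combination, we get $\res^b=\mathbf{\Phi}_{\mathcal{G}}\mathbf{v}$ for a unique $\mathbf{v}$ with $\mathbf{v}|_{\mathcal{R}}=\x_{\mathcal{R}}$. (c) We may assume $n_c<K$ (otherwise $\mathcal{T}-\mathcal{T}^b=\emptyset$ and the claim is vacuous), so $\x_{\mathcal{R}}\neq 0$; applying order-$(K+n_f)$ RIP to $\mathbf{v}$ yields $\|\res^b\|_2\geq\sqrt{1-\delta_{K+n_f}}\,\|\mathbf{v}\|_2>0$ and $\|\mathbf{v}\|_2\geq\|\x_{\mathcal{R}}\|_2$.

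The key step is to show $\res^b$ still correlates strongly with the un-selected true atoms $\mathbf{\Phi}_{\mathcal{R}}$. Using (a) and $\y=\mathbf{\Phi}_{\mathcal{T}}\x_{\mathcal{T}}$,
\begin{equation}
\|\res^b\|_2^2=\langle\res^b,\y\rangle=\langle\mathbf{\Phi}_{\mathcal{R}}^*\res^b,\x_{\mathcal{R}}\rangle\leq\|\mathbf{\Phi}_{\mathcal{R}}^*\res^b\|_2\,\|\x_{\mathcal{R}}\|_2, \nonumber
\end{equation}
the middle equality dropping the $\mathcal{T}\cap\mathcal{T}^b$ contribution by (a). Combining with $\|\x_{\mathcal{R}}\|_2\leq\|\res^b\|_2/\sqrt{1-\delta_{K+n_f}}$ from (c) and cancelling $\|\res^b\|_2>0$ gives $\|\mathbf{\Phi}_{\mathcal{R}}^*\res^b\|_2\geq\sqrt{1-\delta_{K+n_f}}\,\|\res^b\|_2$; since $|\mathcal{R}|=K-n_c$, the largest correct correlation obeys
\begin{equation}
u^{\star}:=\|\mathbf{\Phi}_{\mathcal{R}}^*\res^b\|_{\infty}\geq\frac{\|\mathbf{\Phi}_{\mathcal{R}}^*\res^b\|_2}{\sqrt{K-n_c}}\geq\sqrt{\frac{1-\delta_{K+n_f}}{K-n_c}}\;\|\res^b\|_2>0. \nonumber
\end{equation}

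Now assume, for contradiction, $\Delta\mathcal{T}\cap\mathcal{R}=\emptyset$. By optimality of $\Delta\mathcal{T}$ (the $B$ indices maximizing $\sum_{j}|\langle\bphi_j,\res^b\rangle|$), every $j\in\Delta\mathcal{T}$ satisfies $|\langle\bphi_j,\res^b\rangle|\geq u^{\star}$ — otherwise swapping $j$ for the maximizing index of $\mathcal{R}$ strictly increases the objective. Since $u^{\star}>0$, this and (a) force $\Delta\mathcal{T}$ to avoid $\mathcal{T}^b$ too, so $\Delta\mathcal{T}$ is disjoint from $\mathcal{G}=\mathcal{T}^b\cup\mathcal{R}$. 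Hence $\|\mathbf{\Phi}_{\Delta\mathcal{T}}^*\res^b\|_2\geq\sqrt{B}\,u^{\star}$, while Lemma~\ref{lemma_SP2} applied to the disjoint sets $\Delta\mathcal{T}$ and $\mathcal{G}$, together with $\res^b=\mathbf{\Phi}_{\mathcal{G}}\mathbf{v}$ and (c), gives $\|\mathbf{\Phi}_{\Delta\mathcal{T}}^*\res^b\|_2=\|\mathbf{\Phi}_{\Delta\mathcal{T}}^*\mathbf{\Phi}_{\mathcal{G}}\mathbf{v}\|_2\leq\delta_{K+n_f+B}\|\mathbf{v}\|_2\leq\delta_{K+n_f+B}\|\res^b\|_2/\sqrt{1-\delta_{K+n_f}}$. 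Chaining these, cancelling $\|\res^b\|_2$, and using $\delta_{K+n_f}\leq\delta_{K+n_f+B}$ (Lemma~\ref{lemma_monotonicity}) gives $\sqrt{B}\,(1-\delta_{K+n_f+B})\leq\sqrt{K-n_c}\;\delta_{K+n_f+B}$, i.e.\ $\delta_{K+n_f+B}\geq\sqrt{B}/(\sqrt{K-n_c}+\sqrt{B})$, contradicting (\ref{Eq:AOMP_res1}).

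I expect the key step — the lower bound $\|\mathbf{\Phi}_{\mathcal{R}}^*\res^b\|_2\geq\sqrt{1-\delta_{K+n_f}}\,\|\res^b\|_2$ — to be the main obstacle, since it must pass through $\|\x_{\mathcal{R}}\|_2$ and exploit both $\mathbf{\Phi}_{\mathcal{T}^b}^*\res^b=0$ and the fact that $\res^b$ lies in the column space of $\mathbf{\Phi}_{\mathcal{G}}$. A less direct route (splitting $\mathbf{\Phi}_{\mathcal{R}}^*\res^b$ with Lemmas~\ref{lemma_SP1}--\ref{lemma_SP2} and bounding $1/(1-\delta_{l^b})$ crudely by $2$) would only yield the claim under the extra hypothesis $\delta_{K+n_f+B}<1/2$, which is likely why the $\min(\cdot,1/2)$ appears in (\ref{Eq:AOMP_res1}); with the slicker estimate above, that clause merely strengthens the assumption and is harmless. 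Minor points to dispatch along the way: $\delta_{K+n_f+B}<1$ makes every projection and square root used well defined, and the case $\mathcal{T}\subseteq\mathcal{T}^b$ is vacuous as already noted.
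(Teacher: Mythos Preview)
Your argument is correct and reaches the same numerical inequality as the paper, but by a genuinely different route. The paper writes $\res^b=\mathbf{\Phi}_{\mathcal{T}\cup\mathcal{T}^b}\mathbf{z}$ and pivots everything on $\|\mathbf{z}\|_2$: it applies Lemma~\ref{lemma_SP1} directly to obtain $\|\mathbf{\Phi}_{\mathcal{T}\cup\mathcal{T}^b}^*\res^b\|_2\geq(1-\delta_{K+n_f})\|\mathbf{z}\|_2$, then compares this lower bound against the upper bound $\delta_{K+n_f+B}\|\mathbf{z}\|_2$ from Lemma~\ref{lemma_SP2}. Your approach instead pivots on $\|\res^b\|_2$: you exploit the projection identity $\|\res^b\|_2^2=\langle\mathbf{\Phi}_{\mathcal{R}}^*\res^b,\x_{\mathcal{R}}\rangle$ together with the structural fact $\mathbf{v}|_{\mathcal{R}}=\x_{\mathcal{R}}$ and the RIP bound $\|\res^b\|_2\geq\sqrt{1-\delta_{K+n_f}}\,\|\mathbf{v}\|_2$ to reach the same conclusion without ever invoking Lemma~\ref{lemma_SP1}. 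The paper's path is shorter and more direct; yours is slightly more involved but, as you correctly surmise, has the bonus of establishing the bound $\delta_{K+n_f+B}\geq\sqrt{B}/(\sqrt{K-n_c}+\sqrt{B})$ uniformly in $B$, so that the $\tfrac{1}{2}$ clause in (\ref{Eq:AOMP_res1}) is not needed for your derivation. In the paper, that clause arises precisely from the cap $c=\min(\sqrt{B/(K-n_c)},1)$ in their comparison step, which kicks in when $B>K-n_c$.
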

\begin{proof}
$\Delta \mathcal{T}$ can be defined as
  \begin{equation}
  \Delta \mathcal{T} = \argmax\limits_{\mathcal{J}, |\mathcal{J}|=B}\left\| \mathbf{\Phi}_{\mathcal{J}}^*\res^b\right\|_2.
    \label{DeltaT}
  \end{equation}

$\res^b$ is the residue from the orthogonal projection of $\y$ onto $\mathbf{\Phi}_{\mathcal{T}^b}$.
Therefore, ${\res^b \perp \mathbf{\Phi}_{\mathcal{T}^b}}$, i.e., $\langle\bphi_i,\res^b\rangle = 0$ if $i \in \mathcal{T}^b.$ Hence,
\begin{equation}\label{Corr_TTl}
\left\|\mathbf{\Phi}_{\mathcal{T}\cup \mathcal{T}^b}^*\res^b\right\|_2^2=\sum_{i \in T\cup \mathcal{T}^b}\left\langle\bphi_i,\res^b\right\rangle^2 \;=\; \sum_{i \in T - \mathcal{T}^b}\left\langle\bphi_i,\res^b\right\rangle^2.
\end{equation}
(\ref{Corr_TTl}) has only $K - n_c$ nonzero terms. Combining (\ref{Corr_TTl}) and (\ref{DeltaT}), we can write
\begin{equation} \label{DeltaT_bound1}
\|\mathbf{\Phi}_{\Delta \mathcal{T}}^*\res^b\|_2 = \max\limits_{\mathcal{J}, |\mathcal{J}|=B} \|\mathbf{\Phi}_{\mathcal{J}}^*\res^b\|_2 \geq c\|\mathbf{\Phi}_{\mathcal{T}\cup \mathcal{T}^b}^*\res^b\|_2
\end{equation}
where the inequality holds since $\mathcal{J}$ maximizes $\|\mathbf{\Phi}_{\mathcal{J}}^*\res^b\|_2$, and
$c$ defines a scaling proportional to the number of nonzero terms:
\begin{equation}
c\triangleq \min \left(\sqrt{\frac{B}{K-n_c}},1\right). \nonumber
\end{equation}

Next, the residue can be written as
\begin{equation}\label{residue}
\res^b = y - \mathbf{\Phi}_{\mathcal{T}^b}\hat{\x}^b_{\mathcal{T}^b} = \mathbf{\Phi}_\mathcal{T}\x_\mathcal{T}- \mathbf{\Phi}_{\mathcal{T}^b}\hat{\x}^b_{\mathcal{T}^b} = \mathbf{\Phi}_{\mathcal{T}\cup \mathcal{T}^b}\mathbf{z}
\end{equation}
where $\mathbf{z}\in \mathbb{R}^{K+n_f}$. Using Lemma~\ref{lemma_SP1}, (\ref{DeltaT_bound1}) and (\ref{residue}), we write
\begin{equation}
\|\mathbf{\Phi}_{\Delta \mathcal{T}}^*\res^b\|_2 \geq c\|\mathbf{\Phi}_{\mathcal{T}\cup \mathcal{T}^b}^*\mathbf{\Phi}_{\mathcal{T}\cup \mathcal{T}^b}\mathbf{z}\|_2 \geq c(1-\delta_{K+n_f})\|\mathbf{z}\|_2. \nonumber
\end{equation}

Now, suppose that ${\Delta \mathcal{T} \cap T = \emptyset}$. Then
\begin{equation} \label{Eq:FalseSelect}
\|\mathbf{\Phi}_{\Delta \mathcal{T}}^*\res^b\|_2 = \|\mathbf{\Phi}_{\Delta \mathcal{T}}^*\mathbf{\Phi}_{\mathcal{T}\cup \mathcal{T}^b}\mathbf{z}\|_2
\leq \delta_{K+n_f+B}\|\mathbf{z}\|_2  \nonumber
\end{equation}
by Lemma~\ref{lemma_SP2}. Clearly, this never occurs if
\begin{equation}
 c(1-\delta_{K+n_f})\|\mathbf{z}\|_2 > \delta_{K+n_f+B}\|\mathbf{z}\|_2 \nonumber
\end{equation}
or equivalently
\begin{equation} \label{Cond1}
 \frac{\delta_{K+n_f+B}}{c} + \delta_{K+n_f} < 1.
\end{equation}
Following Lemma~\ref{lemma_monotonicity}, ${\delta_{K+n_f+B} \geq \delta_{K+n_f}}$. Hence, (\ref{Cond1}) is satisfied when
$\left(\frac{1}{c}+1\right)\delta_{K+n_f+B} < 1$, or equivalently
\begin{equation}
\delta_{K+n_f+B} < \frac{c}{1+c}
= \min\left(\frac{\sqrt{B}}{\sqrt{K-n_c}+\sqrt{B}},\frac{1}{2}\right). \nonumber
\end{equation}
This guarantees that ${\Delta \mathcal{T} \cap \mathcal{T} \neq \emptyset}$.
Moreover, since $\langle\bphi_i,\res^b\rangle = 0$ for all $i \in \mathcal{T}^b$, $\Delta \mathcal{T} \cap \mathcal{T}^b = \emptyset$. Hence, we conclude $\Delta \mathcal{T} \cap \{\mathcal{T} - \mathcal{T}^b\}\neq \emptyset$, that is the {\AOMP} iteration is successful.
\end{proof}

Below, Theorem~\ref{Thrm_Main} is used as a basis for exact recovery. Note that we assume $\sqrt{B} \leq \sqrt{K-n_c}$ in the rest and skip the term $\frac{1}{2}$ in Theorem~\ref{Thrm_Main} for simplicity. This can be justified by the fact that $B$ is chosen small (such as 2 or 3) in practice.

\subsection{Exact Recovery Conditions for {\AOMPK}}
\label{Sec:Analysis_AOMPK}

First, let us introduce some definitions:

\emph{Optimal path:} Path $i$ is said to be \textit{optimal} if $\mathcal{T}^i \subset \mathcal{T}$.

\emph{Optimal pruning:} Pruning is defined as \textit{optimal} if it does not remove all optimal paths from the tree.

\figurename~\ref{fig:optimality_tree} illustrates the optimality notion with a typical example.
Indices from the true support are shown by the filled nodes.
The optimal paths, which contain filled nodes only, are designated with `+'.
Since $\Kmax=K=4$, the paths become complete with four nodes. The nodes which are pruned due to $P=3$ are crossed.
Step 3 exemplifies the optimal pruning. Though an optimal path is pruned, there still remains another optimal one.
Expanding this optimal path in Step 5, the search obtains the true solution represented by the complete and optimal path $\{3,4,6,2\}$.

\begin{figure*}[!t]
  \centering
\centerline{\includegraphics[width = 0.7\linewidth]{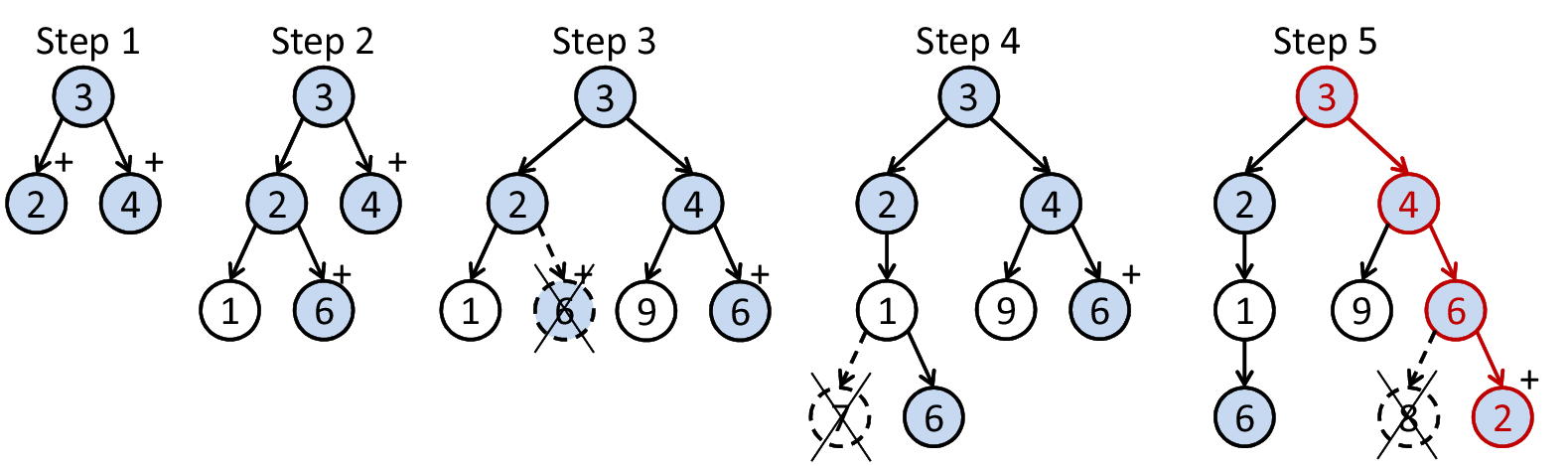}}
\caption{Optimality during the search. The true support is $\{3,4,6,2\}$.}
\label{fig:optimality_tree}
\end{figure*}

Now, we present the exact recovery condition for {\AOMPK}:
\begin{theorem}
\label{Thrm_AOMPK}
Set $\varepsilon = 0$ and $\Kmax = K$. Let $\mathbf{\Phi}$ be full rank and $2K \leq M$ hold\footnote{$2K \leq M$ is a global condition for the uniqueness of all K-sparse solutions. Hence this condition is necessary for any sparse recovery algorithm.}. Assume that pruning is optimal.  Then, {\AOMPK} perfectly recovers all $K$-sparse signals from noise-free measurements if $\mathbf{\Phi}$ satisfies RIP with
\begin{equation}
\delta_{K+B} < \frac{\sqrt{B}}{\sqrt{K}+\sqrt{B}}.
\label{Eq:AOMPK_res1}
\end{equation}
\end{theorem}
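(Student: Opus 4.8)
The plan is to exhibit an invariant of the search: as long as the algorithm has not returned, the tree contains at least one \emph{optimal} path (one whose support is contained in $\mathcal{T}$). Because $\varepsilon=0$ and $\Kmax=K$, a \emph{complete} optimal path must have support exactly $\mathcal{T}$, hence zero residue, which by lines~16--17 forces an immediate return of the true support. So once the invariant is in place the theorem follows from a short termination argument. I would reduce everything to Theorem~\ref{Thrm_Main}, observing that for an optimal path $\mathcal{T}^b\subsetneq\mathcal{T}$ one has $n_f=0$, so that the condition $\delta_{K+n_f+B}<\sqrt{B}/(\sqrt{K-n_c}+\sqrt{B})$ of (\ref{Eq:AOMP_res1}) is implied by (\ref{Eq:AOMPK_res1}) simply because $n_c\ge 0$.

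For the base case, after initialization the tree holds the $I$ indices of the largest entries of $\mathbf{\Phi}^*\y$. Viewing this as the expansion of the empty path, i.e. $\mathcal{T}^b=\emptyset$ and $n_c=n_f=0$, the reasoning of Theorem~\ref{Thrm_Main} shows that (\ref{Eq:AOMPK_res1}) forces at least one selected index to lie in $\mathcal{T}$, so an optimal path of length one is present when the while loop starts. For the inductive step, suppose an optimal path exists at the top of an iteration and let $b$ be the chosen best \emph{incomplete} path. If $b$ is itself optimal, then $l^b<K$, so $\mathcal{T}^b\subsetneq\mathcal{T}$ and Theorem~\ref{Thrm_Main} applies with $n_c=l^b$, $n_f=0$; hence $\Delta\mathcal{T}$ meets $\mathcal{T}-\mathcal{T}^b$ and the corresponding child of $b$ is again optimal. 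If $b$ is not optimal, the optimal path supplied by the hypothesis is different from $b$ and is untouched by the expansion of $b$. In both cases an optimal path is present just before the size-$P$ pruning, and since pruning is assumed optimal at least one optimal path persists into the next iteration, establishing the invariant.

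It remains to argue termination and correctness. The search terminates after finitely many expansions, since there are only finitely many distinct paths of length at most $\Kmax$ and equivalent-path pruning forbids repeats. The while loop cannot exit through $b=\emptyset$: that would require all $P$ paths to be complete, so the optimal path guaranteed by the invariant would be complete with support $\mathcal{T}$ --- but such a path has zero residue and would already have triggered a return when it was created. Hence the loop exits through the residue test, returning some $\widehat{\mathcal{T}}$ with $\|\res\|_2=0$, i.e. $\y=\mathbf{\Phi}_{\widehat{\mathcal{T}}}\mathbf{z}$ with $|\widehat{\mathcal{T}}|\le K$; the full rank of $\mathbf{\Phi}$ together with $2K\le M$ makes the $K$-sparse solution unique, so the returned estimate coincides with $\x$, giving exact recovery.

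The main obstacle is the careful bookkeeping around pruning. ``Optimal pruning'' is only the weak assumption that not \emph{all} optimal paths are deleted, so the induction must be phrased so that a single surviving optimal path --- possibly shorter or longer than the one we started the iteration with, and possibly produced only during the current expansion --- suffices to carry the invariant through both the replacement of $b$ by its children and the subsequent size-$P$ pruning. One must also verify that the degenerate initialization step is genuinely an instance of Theorem~\ref{Thrm_Main}, and that $b=\emptyset$ is the only alternative exit from the loop. By contrast, the RIP estimate is routine, collapsing (via $n_f=0$ for optimal paths and $n_c\ge 0$) to the single-iteration guarantee already in hand.
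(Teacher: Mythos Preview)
Your proof is correct and matches the paper's approach: both maintain the invariant that an optimal path persists in the tree (via Theorem~\ref{Thrm_Main} applied with $n_f=0$, together with the optimal-pruning assumption) and both conclude via the uniqueness of $K$-sparse representations when $2K\le M$ and $\mathbf{\Phi}$ is full rank. You are somewhat more explicit than the paper about the case where the selected $b$ is not optimal and about why the loop cannot exit through $b=\emptyset$, but the substance is identical.
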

\begin{proof}
Let us start with the initialization. With $n_c = n_f = 0$, Theorem~\ref{Thrm_Main} assures success of the first iteration.

Next, consider {\AOMPK} selects an optimal path of length $l$, i.e., $n_c = l$, at some step. By Theorem~\ref{Thrm_Main}, expansion of this path is successful if
\begin{equation}
\delta_{K+B} < \frac{\sqrt{B}}{\sqrt{K-l}+\sqrt{B}}
\end{equation}
which is already satisfied when (\ref{Eq:AOMPK_res1}) holds.

Now, there exists some optimal paths after initialization. Moreover, expanding an optimal path introduces at least one longer optimal path, and by assumption pruning cannot remove all of these. Altogether, these guarantee the existence of at least one optimal path in the tree at any iteration.

On the other hand, the criterion $\varepsilon=0$ requires that the residue should vanish for termination. Since $2K \leq M$ and $\mathbf{\Phi}$ is full rank, the residue may vanish if and only if $\T$ is a subset of the support estimate\footnote{As linearly dependent subsets should contain at least $M+1$ columns of $\mathbf{\Phi}$, any other solution should be at least $(M-K+1)$-sparse.}. Therefore, the search must terminate at a complete optimal path containing $\T$ unless there remain no optimal paths in the search tree. Together with the existence of at least one optimal path, this guarantees exact recovery.
\end{proof}

Note that the condition $\varepsilon = 0$, stated in Theorem~\ref{Thrm_AOMPK} for the sake of theoretical correctness, translates into a very small $\varepsilon$ in practice to account for the numerical computation errors.

We observe that the $K$-step exact recovery condition of OMP, $\delta_{K+1} < 1\mydiv\big(\sqrt{K}+1\big)$, is a special case of Theorem~\ref{Thrm_AOMPK} when $B=I=1$. Moreover, when the bounds for OMP and {\AOMPK} are compared, (\ref{Eq:AOMPK_res1}) is clearly less restrictive, which explains the improved recovery accuracy of {\AOMPK}.

Theorem~\ref{Thrm_AOMPK} is closely related to the theoretical analysis of MMP in \cite{Kwon:MMP}. It can be trivially shown that Theorem~\ref{Thrm_AOMPK} is also applicable to MMP.
Moreover, it implies a better, i.e., less restricted, recovery condition when compared to \cite{Kwon:MMP}, where the condition is stated as  $\delta_{K+B}<1\mydiv\big(\sqrt{K}+2\sqrt{B}\big)$.

\subsection{Exact Recovery with {\AOMPe}}
\label{Sec:Analysis_AOMPe}

We extend the definitions in the previous section for {\AOMPe} where $\Kmax>K$.
First, note that path $i$ is now complete if $l^i = \Kmax > K$. Next, we introduce the following definitions:

\emph{Potentially-optimal path:} A path is said to be \textit{potentially-optimal (p-optimal)} if $n_f \leq \Kmax-K$. A p-optimal path can be expanded into a superset of $\mathcal{T}$ with at most $\Kmax$ nodes. Note that an optimal path is a special case where  $\Kmax = K$.

\emph{Potentially-optimal pruning:} Pruning is defined as \textit{p-optimal} if it does not remove all p-optimal paths from the search tree.

\figurename~\ref{fig:optimality} depicts some examples of p-optimality and completeness. Clearly, a path is p-optimal only if a superset of $\T$ is among its extensions. Moreover, path 1 reveals that the optimal and p-optimal path notions are equal when $n_f = 0$.

\begin{figure*}[!t]
  \centering
\centerline{\includegraphics[width = 0.7\linewidth]{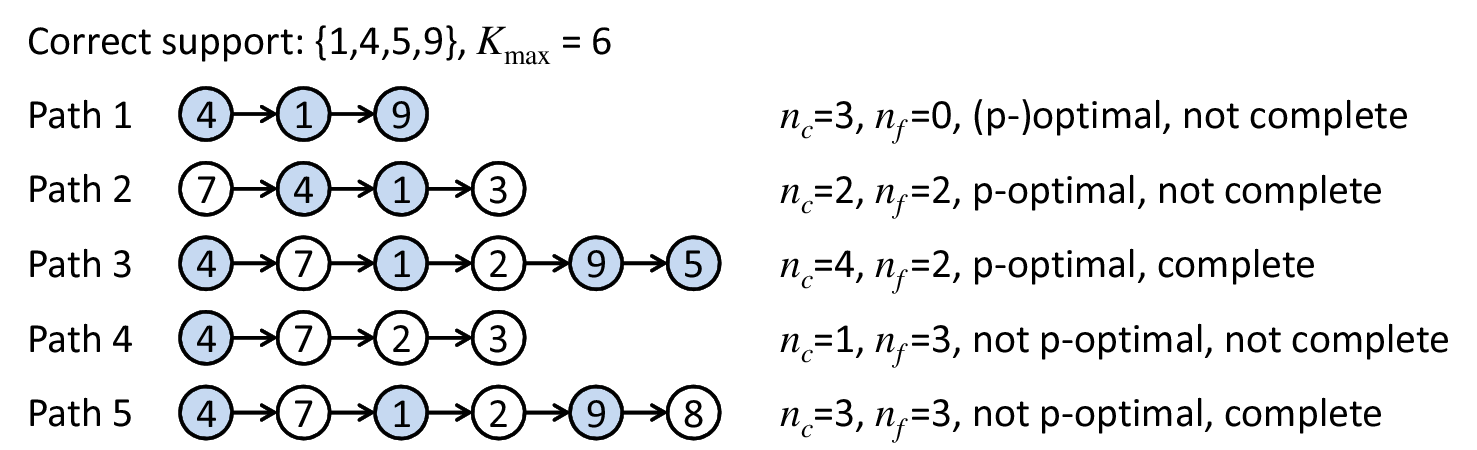}}
\caption{Completeness and p-optimality with respect to $n_c$ and $n_f$.}
\label{fig:optimality}
\end{figure*}

Next, we state the following lemma:

\begin{lemma} \label{lemma:ExpSuboptPath}
Let $i$ be a p-optimal path with $n_c$ correct and $n_f$ incorrect indices. If
\begin{equation}
\delta_{K+n_f+B} < \frac{\sqrt{B}}{\sqrt{K-n_c}+\sqrt{B}},
\label{eq_AOMPe_res_1}
\end{equation}
holds for path $i$, the set of its extensions by each of its best $B$ children contains at least one p-optimal path with $n_c+1$ correct indices. Moreover, (\ref{eq_AOMPe_res_1}) holds for all p-optimal paths in this set.
\end{lemma}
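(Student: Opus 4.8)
The plan is to apply Theorem~\ref{Thrm_Main} directly to the p-optimal path $i$, and then verify that the condition (\ref{eq_AOMPe_res_1}), phrased in terms of $n_c$ and $n_f$, is inherited by the freshly-created children. First I would observe that condition (\ref{eq_AOMPe_res_1}) is precisely the RIP hypothesis (\ref{Eq:AOMP_res1}) of Theorem~\ref{Thrm_Main} (with the $\tfrac12$ term suppressed under the standing assumption $\sqrt{B}\leq\sqrt{K-n_c}$). Therefore Theorem~\ref{Thrm_Main} applies to the expansion of $b=i$: at least one index in $\Delta\mathcal{T}$ lies in $\mathcal{T}-\mathcal{T}^i$. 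The child path obtained by appending that correct index then has $n_c+1$ correct indices and still $n_f$ incorrect ones. Since $i$ is p-optimal, $n_f\leq\Kmax-K$; the child inherits the same $n_f$, so it is again p-optimal, and it has one more correct index than $i$. That establishes the first claim.

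Next I would handle the ``moreover'' part: every p-optimal path in the expansion set satisfies (\ref{eq_AOMPe_res_1}). A child path of $i$ has either $n_f$ incorrect indices (if the appended index is correct) or $n_f+1$ incorrect indices (if it is wrong); in the first case its correct count is $n_c+1$, in the second it stays $n_c$. The key point is that a child is p-optimal \emph{only} if its incorrect count is at most $\Kmax-K$. For the ``correct-index'' children this is automatic since $n_f\leq\Kmax-K$; for the ``wrong-index'' children p-optimality forces $n_f+1\leq\Kmax-K$, i.e. $n_f\leq\Kmax-K-1$. In all these cases I need to check that the RIP bound required for the child, namely $\delta_{K+n_f'+B}<\sqrt{B}/(\sqrt{K-n_c'}+\sqrt{B})$ with $(n_c',n_f')$ the child's counts, follows from the bound for the parent.

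The monotonicity Lemma~\ref{lemma_monotonicity} does the work here, but one must be careful about the direction. For a correct-index child, $n_c'=n_c+1>n_c$ and $n_f'=n_f$, so the right-hand side $\sqrt{B}/(\sqrt{K-n_c'}+\sqrt{B})$ is \emph{larger} than for the parent while the left-hand side $\delta_{K+n_f'+B}=\delta_{K+n_f+B}$ is unchanged — the inequality only gets easier. For a wrong-index child, $n_c'=n_c$ but $n_f'=n_f+1$, so the left-hand side $\delta_{K+n_f+1+B}\geq\delta_{K+n_f+B}$ grows, which is the worrying direction; here the argument must use the extra slack $n_f\leq\Kmax-K-1$ together with an appeal to the condition being assumed uniformly — i.e. (\ref{eq_AOMPe_res_1}) is understood to hold for the relevant range of $(n_c,n_f)$, or alternatively the statement is read as: whenever a child is p-optimal, its own instance of (\ref{eq_AOMPe_res_1}) is what is being asserted. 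I expect this bookkeeping — pinning down exactly for which $(n_c,n_f)$ the hypothesis (\ref{eq_AOMPe_res_1}) is assumed, and showing the wrong-index children's bound is covered — to be the main subtlety, since the correct-index case is immediate from monotonicity and the worst-case bound $\delta_{K+B}<\sqrt{B}/(\sqrt{K}+\sqrt{B})$ already dominates the parent's condition when $n_f=0$. I would close by noting that, because the parent is p-optimal, $K+n_f+B\leq\Kmax+B$, so every RIC invoked is of order at most $\Kmax+B$, keeping all invocations within a fixed range.
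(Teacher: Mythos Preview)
Your treatment of the first claim is exactly what the paper does: invoke Theorem~\ref{Thrm_Main} on path $i$ (noting that (\ref{eq_AOMPe_res_1}) is just (\ref{Eq:AOMP_res1}) with the $\tfrac12$ suppressed), conclude that some child has $n_c+1$ correct and $n_f$ incorrect indices, and observe that this child is p-optimal since $n_f\leq\Kmax-K$ is inherited.

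Where you diverge from the paper is in the ``moreover'' clause. The paper's proof handles \emph{only} the correct-index child $j$ with counts $(n_c+1,n_f)$: the left side $\delta_{K+n_f+B}$ is unchanged while the right side grows from $\sqrt{B}/(\sqrt{K-n_c}+\sqrt{B})$ to $\sqrt{B}/(\sqrt{K-n_c-1}+\sqrt{B})$, so (\ref{eq_AOMPe_res_1}) carries over immediately. That is the entire argument --- precisely your ``correct-index'' case, and nothing more.

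Your worry about wrong-index p-optimal children is well-founded in the sense that the inequality $\delta_{K+n_f+1+B}<\sqrt{B}/(\sqrt{K-n_c}+\sqrt{B})$ genuinely does \emph{not} follow from the parent's hypothesis alone, and no amount of ``extra slack from $n_f\leq\Kmax-K-1$'' or ``uniform assumption over a range'' will rescue it. But the paper simply does not attempt this case. Either the phrase ``all p-optimal paths in this set'' is a mild overstatement, or it is meant to refer to the p-optimal paths produced by the first clause (those with $n_c+1$ correct indices). In the only place the lemma is used (Theorem~\ref{Thrm_AOMPe}), all that is needed is that \emph{some} p-optimal child again satisfies (\ref{eq_AOMPe_res_1}), so the inductive chain can continue. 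Your proposal would be cleaner if you dropped the wrong-index analysis entirely and matched the paper's scope.
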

\begin{proof}
By Theorem~\ref{Thrm_Main}, expansion of path $i$ is successful when (\ref{eq_AOMPe_res_1}) holds. Hence, it introduces at least one p-optimal path, say $j$, with $n_c+1$ correct and $n_f$ incorrect indices. Moreover, the upper bounds from (\ref{eq_AOMPe_res_1}) are related as
\begin{equation}
   \frac{\sqrt{B}}{\sqrt{K-n_c}+\sqrt{B}} < \frac{\sqrt{B}}{\sqrt{K-n_c-1}+\sqrt{B}} \nonumber
\end{equation}
where the left and right sides are for the upper bounds on the RIC's corresponding to the number of correct and false nodes in path $i$ and $j$, respectively.
Since the upper bound is larger for path $j$, and (\ref{eq_AOMPe_res_1}) holds for path $i$, (\ref{eq_AOMPe_res_1}) should also hold for path $j$.
\end{proof}

Now an online recovery condition can be presented for {\AOMPe}:
\begin{theorem}
\label{Thrm_AOMPe} Set $\varepsilon=0$ and $\Kmax \leq M-K$. Let $\mathbf{\Phi}$ be full rank. Assume that pruning is p-optimal. Then, {\AOMPe} perfectly recovers a $K$-sparse signal from noise-free measurements if the search, at any step, expands a path which satisfies $K+n_f \leq \Kmax$ and
\begin{equation}
\delta_{K+n_f+B} < \frac{\sqrt{B}}{\sqrt{K-n_c}+\sqrt{B}}.
\label{eq_AOMPe_res1}
\end{equation}
\end{theorem}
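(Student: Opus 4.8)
The plan is to run essentially the same induction as in the proof of Theorem~\ref{Thrm_AOMPK}, but tracking \emph{p-optimal} paths instead of optimal ones. The invariant to maintain over the iterations of Algorithm~\ref{alg:AOMP} is: at the start of every {\AOMPe} iteration the search tree contains at least one p-optimal path, and, as long as none of them contains all of $\mathcal{T}$, the largest number of correct indices carried by a p-optimal path does not decrease.

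For the base case, after initialization every path has length $1$, so $n_f \leq 1 \leq \Kmax - K$ (using $\Kmax > K$), hence every initial path is p-optimal. For the inductive step, suppose the invariant holds and the search expands the best incomplete path $b$. By hypothesis $b$ satisfies $K + n_f \leq \Kmax$, i.e. $b$ is p-optimal, and (\ref{eq_AOMPe_res1}); Lemma~\ref{lemma:ExpSuboptPath} then guarantees that the $B$ children of $b$ include a p-optimal path with one more correct index. (Under a lenient reading of the hypothesis in which a non-p-optimal path is occasionally expanded, the p-optimal path provided by the invariant is simply untouched by that expansion.) In either case a p-optimal path is present just after the expansion, and since pruning is p-optimal, at least one p-optimal path survives the subsequent tree-size pruning; the maximal $n_c$ over p-optimal paths clearly does not decrease and strictly increases whenever a p-optimal path is expanded. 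This establishes the invariant.

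It remains to show that the search terminates by declaring a zero-residue path (Algorithm~\ref{alg:AOMP}, lines 16--17) rather than by exhausting all complete paths (lines 26--27), and that the returned path yields $\x$. Two observations suffice. First, since $\Kmax \leq M-K$ and $\mathbf{\Phi}$ is full rank, for any path $i$ the set $\mathcal{T}^i \cup \mathcal{T}$ consists of at most $\Kmax + K \leq M$ columns of $\mathbf{\Phi}$, which are therefore linearly independent; consequently $\|\res^i\|_2 = 0$ if and only if $\mathcal{T} \subseteq \mathcal{T}^i$, and in that case the orthogonal projection in line 15 returns exactly $\x$ (the same reasoning as the footnote in Theorem~\ref{Thrm_AOMPK}). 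Hence line 17 can fire only at a correct path. Second, a p-optimal path with $n_c < K$ has length $n_c + n_f < K + (\Kmax - K) = \Kmax$, so it is incomplete. Combining this with the invariant: as long as no path created so far contains $\mathcal{T}$, the tree holds an incomplete p-optimal path, so the best-incomplete-path selection in line 24 never returns $\emptyset$ and the while loop cannot exit through lines 26--27. Because equivalent-path pruning forbids re-exploring a support set and there are only finitely many candidate supports of size at most $\Kmax$, the search runs for finitely many iterations; it must therefore exit through line 17, which forces some p-optimal path to reach $n_c = K$ and terminate the search with the exact solution.

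The base case and the pruning bookkeeping are routine. The delicate point -- and the place where the $\Kmax > K$ regime genuinely differs from Theorem~\ref{Thrm_AOMPK} -- is the coupling between the two termination modes: one has to notice that p-optimality forces $n_c = K$ before a path can become complete, which is precisely what rules out an unsuccessful exit via lines 26--27. A secondary subtlety is the reading of the hypothesis: it must be applied to every path the search actually expands, so that Lemma~\ref{lemma:ExpSuboptPath} is available at each iteration, while any expansion of a non-p-optimal path (under the lenient reading) is harmless provided the surviving p-optimal path is not destroyed by pruning -- which is exactly the p-optimal pruning assumption.
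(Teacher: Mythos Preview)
Your argument is essentially the paper's: maintain the invariant that the tree always contains a p-optimal path (via Lemma~\ref{lemma:ExpSuboptPath} and the p-optimal pruning assumption), then use $\Kmax \le M-K$ and the full rank of $\mathbf{\Phi}$ to conclude that the residue can vanish only when the returned support contains $\mathcal{T}$; your explicit check that a p-optimal path with $n_c<K$ is necessarily incomplete, ruling out exit through lines 26--27, is a detail the paper leaves implicit. The one difference worth noting is that you read the hypothesis as holding at every expansion, whereas the paper intends it existentially (a single qualifying step suffices) and then leans on the second clause of Lemma~\ref{lemma:ExpSuboptPath}---that (\ref{eq_AOMPe_res1}) is inherited by the p-optimal children---to propagate the condition through subsequent iterations, a part of the lemma you never invoke.
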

\begin{proof}
As $K+n_f \leq \Kmax$, the best path at this step, $b$, is p-optimal.
Moreover,  Lemma~\ref{lemma:ExpSuboptPath} and (\ref{eq_AOMPe_res1}) guarantee p-optimality of at least one child of $b$.
By assumption pruning cannot remove all p-optimal paths.
Altogether, these guarantee the existence of at least one p-optimal path in the tree until termination.

On the other hand, the criterion $\varepsilon=0$ requires that the residue should vanish for termination. Since $\Kmax \leq M-K$ and $\mathbf{\Phi}$ is full rank, the residue may vanish if and only if the support estimate is a superset of $\T$. Therefore, the search must terminate at a p-optimal path containing the true support unless there remain no p-optimal paths in the search tree. As $\mathbf{\Phi}$ is full rank, the orthogonal projection of $\y$ onto this set yields exactly $\x$. Combining with the existence of at least one p-optimal path, exact recovery of $\x$ is guaranteed.
\end{proof}

Note that the condition $\varepsilon = 0$, stated in Theorem~\ref{Thrm_AOMPe} for the sake of theoretical correctness, translates into a very small $\varepsilon$ in practice to account for the numerical computation errors.

Since  Theorem~\ref{Thrm_AOMPe} depends on the existence of a p-optimal path satisfying (\ref{eq_AOMPe_res1}), it does not provide overall guarantees for all $K$-sparse signals as Theorem~\ref{Thrm_AOMPK} does.
In contrast, Theorem~\ref{Thrm_AOMPe} implies online guarantees depending on the support estimates of the intermediate (i.e., neither complete nor empty) p-optimal paths.
In fact, it is hardly possible to provide guarantees for the existence of such paths.
Nonetheless, Theorem~\ref{Thrm_AOMPe} can be generalized starting with the empty path:
\begin{theorem}
\label{Thrm_AOMPe_2} Set $\varepsilon=0$, $I \geq B$ and $\Kmax \leq M-K$. Let $\mathbf{\Phi}$ be full rank. Assume pruning is optimal. Then, {\AOMPe} perfectly recovers all $K$-sparse signals from noise-free measurements if $\mathbf{\Phi}$ satisfies RIP with
\begin{equation}
\delta_{K+B} < \frac{\sqrt{B}}{\sqrt{K}+\sqrt{B}}.
\label{Eq:AOMPe_res2}
\end{equation}
\end{theorem}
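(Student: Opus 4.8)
The plan is to establish Theorem~\ref{Thrm_AOMPe_2} as the ``uniform'' counterpart of Theorem~\ref{Thrm_AOMPe}, reusing the bookkeeping from the proof of Theorem~\ref{Thrm_AOMPK} but now tracking the strictly \emph{optimal} paths (those with $n_f=0$) inside the deeper {\AOMPe} tree. First I would dispatch the initialization: the empty path has $n_c=n_f=0$, so Theorem~\ref{Thrm_Main} with $n_c=n_f=0$ reduces exactly to (\ref{Eq:AOMPe_res2}) and guarantees that the $B$ largest magnitude entries of $\mathbf{\Phi}^*\y$ contain a correct index. Since $I\ge B$, the $I$ initial single-node paths therefore include at least one optimal path.

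Next I would prove the invariant that the tree always contains an optimal path. Expanding a non-optimal path cannot delete an existing optimal path. Expanding an optimal path $\pi$ of length $l$ — which is selected as $b$ only while $l\le K-1$, since an optimal path of length $K$ already equals $\T$, has zero residue, and fires the termination test the moment it is created — succeeds by Theorem~\ref{Thrm_Main}: here $n_c=l$, $n_f=0$ give the requirement $\delta_{K+B}<\sqrt{B}/(\sqrt{K-l}+\sqrt{B})$, which is implied by (\ref{Eq:AOMPe_res2}) because $\sqrt{K-l}\le\sqrt{K}$; hence one of the $B$ children of $\pi$ is again optimal, of length $l+1$. Finally, optimality of the pruning never removes all optimal paths. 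Since the initialization produced one, and each of these three situations preserves the presence of at least one optimal path before (and hence after) every pruning step, an optimal path exists at every iteration; and such a path has length at most $K<\Kmax$, so it is never complete.

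It remains to translate this into termination with exact recovery. Because an incomplete path is permanently present, line~24 never yields $b=\emptyset$, so the while loop cannot exit through the ``all paths complete'' branch; and since equivalent-path pruning opens each distinct support set at most once, the search performs only finitely many iterations and must therefore exit. The only remaining exit is the residue test (lines~16--17), which with $\varepsilon=0$ fires at some $\widehat{\mathcal{T}}$ with $\|\hat{\res}\|_2=0$ — at the latest when an optimal path is grown to length $K$, i.e. to $\T$ itself. From here the reasoning is identical to the end of the proof of Theorem~\ref{Thrm_AOMPe}: with $\mathbf{\Phi}$ full rank and $\Kmax\le M-K$, a vanishing residue forces $\T\subseteq\widehat{\mathcal{T}}$, so the orthogonal projection of $\y$ onto $\mathbf{\Phi}_{\widehat{\mathcal{T}}}$ returns $\x$ exactly.

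The delicate point is not any inequality but ruling out the failure exit ``all $P$ paths complete, none satisfying the residue criterion''. That is precisely what forces the detour through strictly optimal paths — length $\le K<\Kmax$, hence never complete — backed by the \emph{optimal}-pruning hypothesis and by $I\ge B$; this is the price of upgrading the signal-dependent guarantee of Theorem~\ref{Thrm_AOMPe} to a uniform statement over all $K$-sparse signals. Everything else is already contained in the proofs of Theorems~\ref{Thrm_Main}, \ref{Thrm_AOMPK} and \ref{Thrm_AOMPe}.
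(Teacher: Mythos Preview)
Your proof is correct. The paper's own proof is a single sentence: it simply invokes Theorem~\ref{Thrm_AOMPe} at the initialization step, observing that the empty path has $n_c=n_f=0$ so that (\ref{eq_AOMPe_res1}) specializes to (\ref{Eq:AOMPe_res2}). You take a slightly different route: rather than delegating to Theorem~\ref{Thrm_AOMPe} (which tracks \emph{p-optimal} paths via Lemma~\ref{lemma:ExpSuboptPath}), you re-run the bookkeeping of Theorem~\ref{Thrm_AOMPK}, tracking strictly \emph{optimal} paths inside the deeper {\AOMPe} tree. What this buys you is a cleaner match with the stated hypotheses --- you use the \emph{optimal}-pruning assumption directly and you make explicit why $I\ge B$ is needed at initialization, two points the paper's one-liner glosses over (optimal pruning does imply p-optimal pruning, since every optimal path is p-optimal, but the paper never says so). Your additional care in ruling out the ``all paths complete'' exit, by noting that an optimal path has length $\le K<\Kmax$ and is therefore always incomplete, is likewise something the paper leaves implicit. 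The core idea in both arguments is identical: condition (\ref{Eq:AOMPe_res2}) is exactly the per-step success condition at the root, and the invariant propagates from there.
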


We omit the proof of Theorem~\ref{Thrm_AOMPe_2}, since it follows trivially from Theorem~\ref{Thrm_AOMPe} by replacing $n_f=n_c=0$.

Theorem~\ref{Thrm_AOMPe_2} provides overall guarantees for all sparse signals as Theorem~\ref{Thrm_AOMPK}.
We observe that both theorems require the same RIP condition for exact recovery of all sparse signals.

Although Theorem~\ref{Thrm_AOMPK} and Theorem~\ref{Thrm_AOMPe_2} provide similar overall guarantees, {\AOMPe} also possesses the online guarantees of Theorem~\ref{Thrm_AOMPe}.
Section~\ref{Sec:Analysis_Comp} presents an analytical comparison of the conditions in Theorem~\ref{Thrm_AOMPe} and Theorem~\ref{Thrm_AOMPe_2}.
This study states that Theorem~\ref{Thrm_AOMPe} may be satisfied even when Theorem~\ref{Thrm_AOMPe_2} fails.
This reveals the importance of the online guarantees to comprehend the improved recovery accuracy of {\AOMPe}.

\subsection{A Note on Pruning}
\label{sec:Pruning}

Theorem~\ref{Thrm_AOMPK} and Theorem~\ref{Thrm_AOMPe_2} both rely on optimal/p-optimal pruning, which is hard to prove analytically.
Though this may be seen as a limitation of the theoretical findings, it is  obvious that pruning is unavoidable for the tractability of the search.
As an empirical justification, it is important to observe that the true solution may be reached along different paths in the tree.
This is due to the fact that the ordering of the nodes is not important. Hence, the tree is subject to contain a large number of candidate solutions (optimal/p-optimal paths), while it is  enough for exact recovery when only one of these optimal paths is not pruned.

Moreover, the theoretical analysis of MMP in \cite{Kwon:MMP} is also subject to an equivalent assumption in practice.
Though the authors analyse MMP without any limit on the number of explored paths, they acknowledge that this is impractical.
Similar to {\AOMP}, they limit the number of paths for the empirical evaluation of MMP.
It is clear that the theoretical findings of \cite{Kwon:MMP}, which do not address this pruning strategy, are practically meaningful only with an assumption on pruning.

\subsection{On the Validity of the Online Guarantees}
\label{Sec:Analysis_Comp}

To address the validity of the online condition in Theorem~\ref{Thrm_AOMPe}, we show that it can be satisfied when the overall guarantees in Theorem~\ref{Thrm_AOMPe_2} fail.
The next theorem reveals that a p-optimal path satisfying (\ref{eq_AOMPe_res1}) may be found even when (\ref{Eq:AOMPe_res2}) fails.

\begin{theorem}
\label{Thrm_AOMP_Comp}
Assume $K\geq(3+2\sqrt{B})^2$. If $1\leq n_f+B\leq\lceil K/2\rceil$ and $n_c$ satisfies (\ref{Eq:n_cBound}) at some intermediate iteration, (\ref{eq_AOMPe_res1}) becomes less restrictive than (\ref{Eq:AOMPe_res2}).
\end{theorem}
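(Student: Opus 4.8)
The plan is to show that, under the stated hypotheses, the overall RIP condition (\ref{Eq:AOMPe_res2}) of Theorem~\ref{Thrm_AOMPe_2} is itself \emph{sufficient} for the online condition (\ref{eq_AOMPe_res1}) of Theorem~\ref{Thrm_AOMPe} to hold at the iteration in question, and that this implication carries a strict slack, so that (\ref{eq_AOMPe_res1}) stays satisfiable on a range of restricted isometry constants where (\ref{Eq:AOMPe_res2}) already fails. The argument chains together three facts already in hand: monotonicity of the RIC (Lemma~\ref{lemma_monotonicity}), the order-tripling bound of Lemma~\ref{lemma_ExRec2}, and the lower bound on $\sqrt{B}/(\sqrt{K-n_c}+\sqrt{B})$ that Lemma~\ref{lemma_ExRec1} attaches to (\ref{Eq:n_cBound}).

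First I would control the RIC order on the left of (\ref{eq_AOMPe_res1}). Since $n_f+B\le\lceil K/2\rceil$ and $3\lceil K/2\rceil - K \ge \lceil K/2\rceil$ (check $K$ even and odd separately; equality occurs for $K$ even), we get $K+n_f+B\le 3\lceil K/2\rceil$, hence $\delta_{K+n_f+B}\le\delta_{3\lceil K/2\rceil}$ by Lemma~\ref{lemma_monotonicity}. Next I would control the right-hand side: because $K\ge(3+2\sqrt{B})^2$ and $n_c$ obeys (\ref{Eq:n_cBound}), the computation inside the proof of Lemma~\ref{lemma_ExRec1} shows that (\ref{Eq:n_cBound}) is in fact equivalent to $3\sqrt{B}/(\sqrt{K}+\sqrt{B})\le\sqrt{B}/(\sqrt{K-n_c}+\sqrt{B})$, i.e.\ the bound in (\ref{eq_AOMPe_res1}) is at least three times that in (\ref{Eq:AOMPe_res2}). (The hypothesis $n_f+B\ge 1$ is automatic since $B\ge 1$; it only records that we are in the non-degenerate regime.)

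Now combine these with Lemma~\ref{lemma_ExRec2}. If $\mathbf{\Phi}$ satisfies (\ref{Eq:AOMPe_res2}), then
\begin{equation}
\delta_{K+n_f+B} \;\le\; \delta_{3\lceil K/2\rceil} \;<\; 3\,\delta_{K+B}, \nonumber
\end{equation}
and since (\ref{Eq:AOMPe_res2}) gives $3\,\delta_{K+B}<3\sqrt{B}/(\sqrt{K}+\sqrt{B})$ while Lemma~\ref{lemma_ExRec1} gives $3\sqrt{B}/(\sqrt{K}+\sqrt{B})\le\sqrt{B}/(\sqrt{K-n_c}+\sqrt{B})$, we obtain $\delta_{K+n_f+B}<\sqrt{B}/(\sqrt{K-n_c}+\sqrt{B})$, which is exactly (\ref{eq_AOMPe_res1}). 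Thus (\ref{Eq:AOMPe_res2}) is sufficient for (\ref{eq_AOMPe_res1}). To see that (\ref{eq_AOMPe_res1}) is \emph{strictly} less restrictive, I would isolate the weaker intermediate requirement $\delta_{3\lceil K/2\rceil} < 3\sqrt{B}/(\sqrt{K}+\sqrt{B})$: by the first and last links of the chain (monotonicity and Lemma~\ref{lemma_ExRec1}) it already forces (\ref{eq_AOMPe_res1}), yet by Lemma~\ref{lemma_ExRec2} it is compatible with $\delta_{K+B}$ taking any value in the interval between $\sqrt{B}/(\sqrt{K}+\sqrt{B})$ and $3\sqrt{B}/(\sqrt{K}+\sqrt{B})$ --- precisely the range on which (\ref{Eq:AOMPe_res2}) is violated. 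Hence a p-optimal path meeting (\ref{eq_AOMPe_res1}) may be encountered even when the global guarantee (\ref{Eq:AOMPe_res2}) does not hold.

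The main obstacle is making the strictness claim airtight rather than any single inequality: the implication (\ref{Eq:AOMPe_res2}) $\Rightarrow$ (\ref{eq_AOMPe_res1}) drops out mechanically from the chain, but ``less restrictive'' also requires that the separating range of RICs be genuinely realizable. I would settle this by invoking the factor-three margin built into Lemma~\ref{lemma_ExRec2} together with the fact that RIP-admissible matrices span a continuum of RIC values, so that the interval left for $\delta_{K+B}$ is not vacuous, rather than by constructing an explicit $\mathbf{\Phi}$. A secondary, purely bookkeeping point is the ceiling-function arithmetic used to bound $K+n_f+B$ by $3\lceil K/2\rceil$, which is where the parity distinction on $K$ enters.
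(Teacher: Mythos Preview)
Your argument is correct and uses exactly the same three ingredients as the paper (monotonicity of the RIC, Lemma~\ref{lemma_ExRec2}, and Lemma~\ref{lemma_ExRec1}), pivoting on the same intermediate threshold $3\sqrt{B}/(\sqrt{K}+\sqrt{B})$. The only difference is organizational: the paper starts from the assumption $\delta_{K+n_f+B}\ge 3\sqrt{B}/(\sqrt{K}+\sqrt{B})$ and shows it simultaneously forces (\ref{Eq:AOMPe_res2}) to fail while remaining compatible with (\ref{eq_AOMPe_res1}), whereas you first derive the implication (\ref{Eq:AOMPe_res2}) $\Rightarrow$ (\ref{eq_AOMPe_res1}) and then handle strictness separately---the logical content, and the looseness you correctly flag about realisability of the RIC range, are identical in both.
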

\begin{proof} Assume that
\begin{equation}
\label{Eq_Thrm_OMP2_1}
\delta_{K+n_f+B} \geq \frac{3\sqrt{B}}{\sqrt{K}+\sqrt{B}}.
\end{equation}
Since $n_f + B \leq \lceil K/2 \rceil$, we can write $3\lceil K/2 \rceil \geq K+n_f+B$. By Lemma~\ref{lemma_monotonicity}, we obtain
\begin{equation}
\delta _{3\lceil K/2 \rceil} \geq \frac{3\sqrt{B}}{\sqrt{K}+\sqrt{B}}. \nonumber
\end{equation}
Then, Lemma~\ref{lemma_ExRec2} yields
\begin{equation}
\delta_{K+B} > \frac{\sqrt{B}}{\sqrt{K}+\sqrt{B}} \nonumber
\end{equation}
which clearly contradicts (\ref{Eq:AOMPK_res1}).
In contrast, Lemma~\ref{lemma_ExRec1} yields
\begin{equation}
 \frac{3\sqrt{B}}{\sqrt{K}+\sqrt{B}} \leq \frac{\sqrt{B}}{\sqrt{K-n_c}+\sqrt{B}} \nonumber
\end{equation}
for $n_c$ satisfying (\ref{Eq:n_cBound}) when $K \geq (3+2\sqrt{B})^2$. That is, there exists some range of $\delta_{K+n_f+B}$ such that
\begin{equation}
\label{Eq_OMP_1}
 \frac{3\sqrt{B}}{\sqrt{K}+\sqrt{B}} \leq \delta_{K+n_f+B} \leq \frac{\sqrt{B}}{\sqrt{K-n_c}+\sqrt{B}}. \nonumber
\end{equation}
Hence, under the given conditions, there exists some $\delta_{K+n_f+B}$ satisfying (\ref{eq_AOMPe_res1}), but no $\delta_{K+B}$ satisfying (\ref{Eq:AOMPe_res2}).
\end{proof}

Theorem~\ref{Thrm_AOMP_Comp} clarifies that Theorem~\ref{Thrm_AOMPe} may hold even when Theorem~\ref{Thrm_AOMPe_2} fails.
In other words, {\AOMPe} possesses online guarantees for some sparse signals for which the overall guarantees fail.
This explains why {\AOMPe} improves the recovery accuracy over {\AOMPK}, and reveals that the residue-based termination is more optimal for noise-free sparse signal recovery than its sparsity-based counterpart.

Note that Theorem~\ref{Thrm_AOMP_Comp} is based on some $n_f$, $n_c$ and $K$ ranges for which it is provable. This is enough, since even a single supporting case already establishes the validity of Theorem~\ref{Thrm_AOMPe}.
On the other hand, we expect it to be valid for a wider range. This intuition is also supported by the following simulations, where {\AOMPe} improves recovery in almost all cases.

\section{Empirical Analyses}
\label{sec:results}

We demonstrate {\AOMP} in comparison to BP \cite{Chen:BP}, SP \cite{Dai:SP}, OMP \cite{Pati:OMP}, ISD \cite{Wang:ISD}, SL0 \cite{Mohimani:SL0}, IHT \cite{Blumensath:IHT2}, FBP \cite{FBP_DSP} and MMP-DF \cite{Kwon:MMP} in various scenarios involving synthetically generated signals and images. We mainly concentrate on comparison with different algorithms. Regarding the impact of {\AOMP} parameters such as $B$, $P$ and $\alpha$ on the recovery performance, we refer the reader to \cite{Karahanoglu:AOMPfull}, where the matter has been discussed with detailed empirical analysis. The numerical choices of the parameters in this work are mainly based on these findings, which we do not repeat here.

\subsection{Experimental Setup}

Unless given explicitly, the setup is as follows: We set $I=3$, $B=2$ and $P=200$. For {\AOMPe}, $\varepsilon$ is set to $10^{-6}$. This $\varepsilon$ is shared by OMP, which also runs until $\|\res\|_2 \leq \varepsilon\|\y\|_2$.
We select $\alpha_{\text{Mul}}=0.8$ for Mul-{\AOMPK}, $\alpha_{\text{Mul}}=0.9$ for Mul-{\AOMPe} and $\alpha_{\text{AMul}}=0.97$ for AMul-{\AOMPe}. We employ FBP with $\alpha=0.2M$ and $\beta = \alpha-1$ as suggested in \cite{FBP_DSP}.
For MMP-DF, we set the branching factor $L=6$ following \cite{Kwon:MMP}, and allow a maximum of 200 paths for a fair comparison with the {\AOMP} variants where $P=200$.
Each test is repeated over a randomly generated set of $S$ sparse samples.
For each sample, $\mathbf{\Phi}$ is drawn from the Gaussian distribution with mean zero and standard deviation $1/N$.
The nonzero entries of the test samples are selected from three random ensembles. The nonzero entries of the Gaussian sparse signals follow standard Gaussian distribution while those of the uniform sparse signals are distributed uniformly in $[-1,1]$.
The constant amplitude random sign (CARS) sparse signals have unit magnitude nonzero elements with random sign.
The average normalized mean-squared-error (ANMSE) is defined as
\begin{equation}\label{Eq:ANMSE}
    \text{ANMSE} = \frac{1}{S} \sum_{i=1}^{S}{\frac{\|\x_i - \hat{\x}_i\|_2^2}{\|\x_i\|_2^2}}
\end{equation}
where $\hat{\x}_i$ is the recovery of the $i$th test vector $\x_i$.

We perform the {\AOMP} recovery using the AStarOMP software\footnote{Available at http://myweb.sabanciuniv.edu/karahanoglu/research/.}. AStarOMP implements the A$^\star$ search by an efficient trie\footnote{Trie is an ordered tree data structure in computer science. The ordering provides important advantages for the implementation of {\AOMP}, such as reducing the cost of addition/deletion of paths and finding equivalent paths.} structure \cite{Sedgewick:Trie}, where the nodes are ordered with priorities proportional to their inner products with $\y$. This maximizes the number of shared nodes between paths, allowing a more compact tree representation and faster tree modifications. The orthogonal projection is performed via the QR factorization.

\subsection{Exact Recovery Rates and Reconstruction Error}

\begin{figure*}[!t]
  \centering
\centerline{\includegraphics[width=0.9\linewidth]{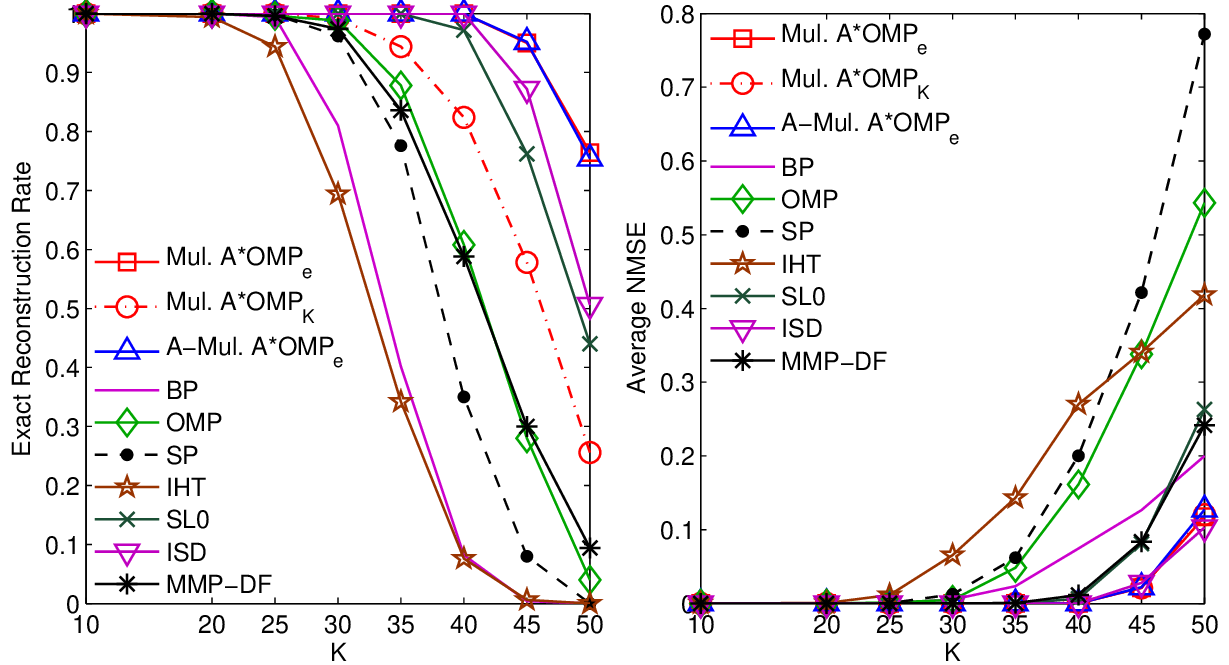}}
\includegraphics[width=0.45\linewidth]{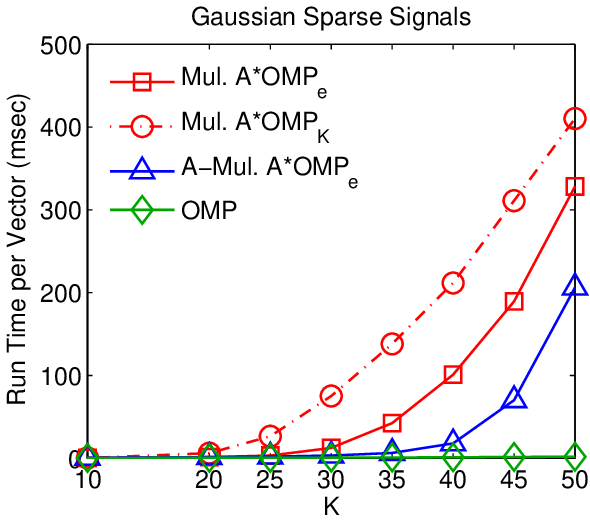}
\caption{Recovery results and average run time for the Gaussian sparse signals.}
\label{fig:gauss}
\end{figure*}

\begin{figure*}[!t]
\begin{center}
\includegraphics[width=\linewidth]{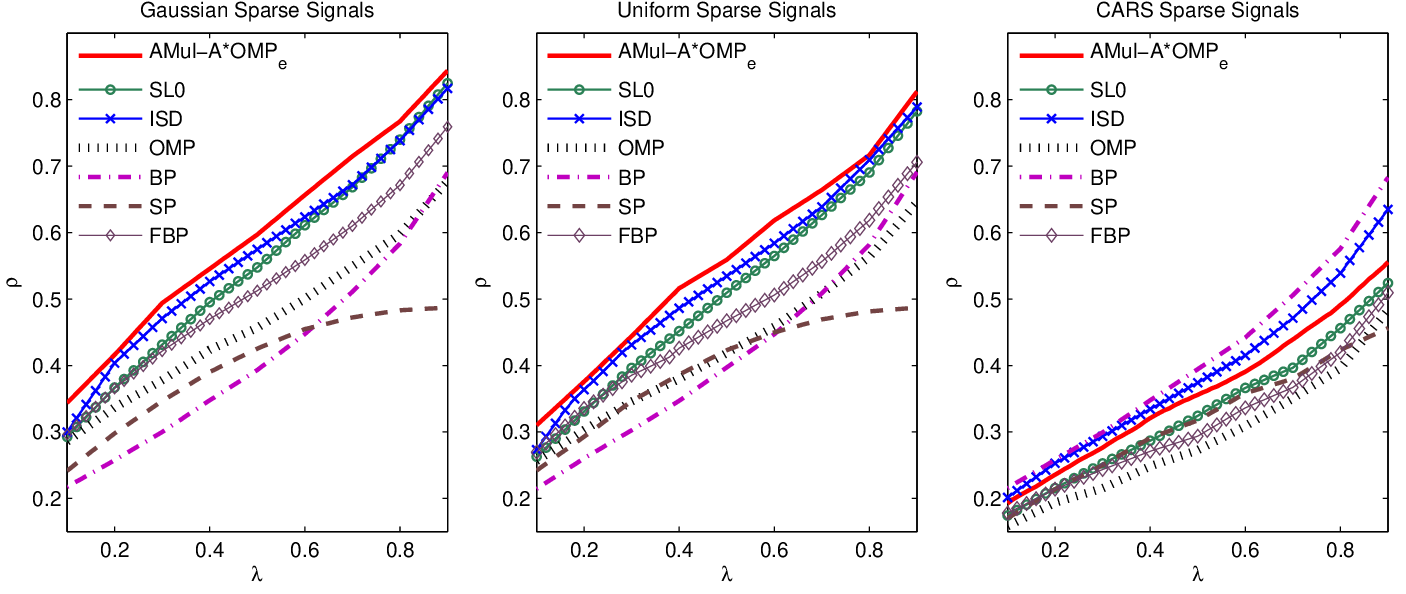}
\caption{Phase transitions of AMul-{\AOMPe}, BP, SP, OMP, ISD and SL0 for Gaussian, uniform and CARS sparse signals.}
\label{fig:PT}
\end{center}
\end{figure*}

The first simulation deals with the recovery of Gaussian sparse signals, where $N=256$, $M=100$, $K\in[10,50]$, $\Kmax=55$ and $S=500$.
The results are depicted in \figurename~\ref{fig:gauss}, which reveals that {\AOMP} performs significantly better than the other algorithms.
{\AOMPe} provides exact recovery until $K=40$, which is clearly the best.
We observe similar ANMSE among {\AOMP} variants, while the residue-based termination improves the exact recovery rates significantly. Evidently, {\AOMPe} is better than {\AOMPK} at identifying smaller magnitude entries, which hardly change the ANMSE, however increase the exact recovery rates.
It is also important that all {\AOMP} variants yield higher recovery rates than MMP-DF. Especially, Mul-{\AOMPK} and MMP-DF are interesting to compare since both return solutions with $K$ indices. We observe that Mul-{\AOMPK} yields significantly higher recovery rates than MMP-DF, which implies the effectiveness of the sophisticated search techniques employed in {\AOMP}.

As for the average run times\footnote{OMP and {\AOMP} are tested using the AStarOMP software.
The other algorithms are ignored as they run in MATLAB, which is slower.}, both the residue-based termination and the AMul cost model significantly accelerate {\AOMP} due to the relaxation of $\alpha$ to larger values.
Since AMul-{\AOMPe} can afford the largest $\alpha$, it is the fastest {\AOMP} variant.
This confirms the claim in Section~\ref{sec:AStar} that increasing $\alpha$ reduces the number of explored nodes and accelerates \AOMP.

In addition to this example, \cite{AOMP_EUSIPCO} contains simulations for uniform and binary sparse signals and noisy measurements.
These simulations indicate that AMul-{\AOMPe} improves the recovery for uniform sparse signals and noisy cases as well.

\subsection{Phase Transitions}

Empirical phase transitions provide important means for recovery analysis, since they reveal the recovery performance over the feasible range of $M$ and $K$. Consider the normalized measures $\lambda = M/N$ and $\rho = K/M$. The phase transition curve is mostly a function of $\lambda$ \cite{Maleki:TST}, hence it allows for a general characterization of the recovery performance.

To obtain the phase transitions, we fix $N=250$, and alter $M$ and $K$ to sample the $\{\lambda, \rho\}$ space for $\lambda \in [0.1, 0.9]$ and $\rho \in [0,1]$. For each $\{\lambda,\rho\}$ tuple, we randomly generate 200 sparse instances and perform the recovery.
Setting the exact recovery criterion as $\frac{\|\x_i - \hat{\x}_i\|_2}{\|\x_i\|_2} \leq 10^{-2}$, we count the exactly recovered samples. The phase transitions are then obtained as in \cite{Maleki:TST}. For each $\lambda$, we employ a generalized linear model with logistic link to describe the exact recovery curve over $\rho$, and then find $\rho$ which yields $50\%$ exact recovery probability. Combination of these $\rho$ values gives the phase transition curve.

Let us first concentrate on $K_\text{max}$ and define the normalized measure $\rho_\text{max} = K_\text{max}\mydiv M$. Once we identify the optimal $\rho_\text{max}$ over $\lambda$, we can set $K_\text{max}$ for particular $\lambda$ and $M$.
To find the optimal $\rho_\text{max}$, we have run a number of simulations and observed that the recovery performance of AMul-{\AOMPe} is quite robust to $\rho_\text{max}$, with a perturbation up to $3\%$. Hence, the recovery accuracy is mostly independent of $\rho_\text{max}$. Yet, based on our experience, we set $\rho_\text{max} = 0.5 + 0.5\lambda$ taking into account both the accuracy and complexity of the search.

\figurename~\ref{fig:PT} depicts the phase transition curves. Clearly, AMul-{\AOMPe} yields better phase transitions than the other algorithms for the Gaussian and uniform sparse signals.
We also observe that FBP provides fast approximations with better accuracy than BP and the two other greedy competitors, SP and OMP for these two cases. This reveals that FBP is actually suitable to applications where speed is crucial.
On the other hand, BP and ISD are the best performers for the CARS case, while AMul-{\AOMPe} is the third best.
We observe that BP is robust to the coefficient distribution, while the phase transitions for AMul-{\AOMPe} and OMP exhibit the highest variation among distributions.
This indicates that OMP-type algorithms are more effective when the nonzero elements span a wide magnitude range such as the Gaussian distribution. Moreover, if this range gets wide enough, even OMP can outperform BP.
In parallel, the CARS ensemble is referred to as the most challenging case for the greedy algorithms in the literature \cite{Dai:SP, Maleki:TST}.
This can be understood analytically by considering the span of the correlation between $\mathbf{\Phi}_\T$ and $\y$. The detailed analytical analysis in \cite{Karahanoglu:phdthesis} state that this span gets smaller when the magnitudes of the nonzero elements get closer, and vice versa. When this span gets smaller (for the CARS ensemble in the limit), wrong indices are more likely to be selected by OMP, increasing the failure rate.

\subsection{A Hybrid Approach for Faster Practical Recovery}

\begin{figure*}[!t]
\begin{center}
\includegraphics[width=0.8\linewidth]{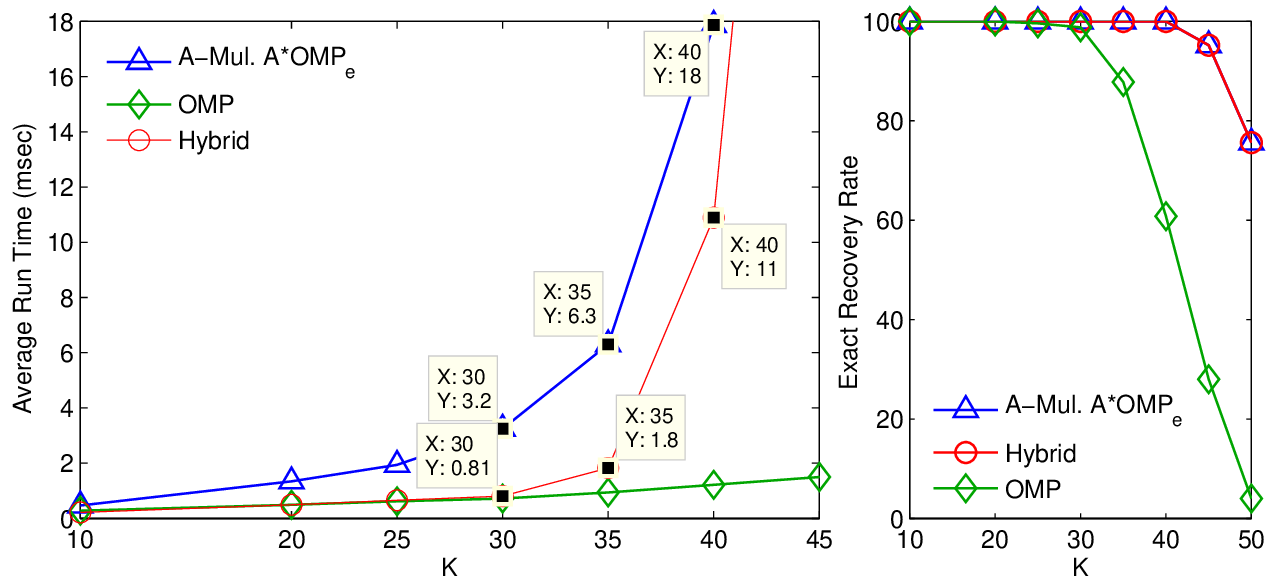}
\end{center}
\caption{Performance of the hybrid scheme for the Gaussian sparse vectors.}
\label{fig:Hybrid}
\end{figure*}

We have observed that OMP provides exact recovery up to some mid-sparsity range.  Moreover, there are regions where AMul-{\AOMPe} provides exact recovery while OMP recovery rates are also quite high.
In these regions, we can accelerate the recovery without sacrificing the accuracy by a two-stage hybrid scheme.
We first run OMP, and then AMul-{\AOMPe} only if OMP fails.
Assuming that ${K{+}\Kmax}$-RIP holds, a non-vanishing residue indicates OMP failure, and then AMul-{\AOMPe} is run. This reduces the number of AMul-{\AOMPe} runs and accelerates the recovery.
Moreover, we use the order by which OMP chooses the vectors for setting the priorities of trie nodes. A vector OMP chooses first gets higher priority, and is placed at lower levels of the trie. This reduces not only the trie size but also trie modification costs.

According to the recovery results in \figurename~\ref{fig:Hybrid}, AMul-{\AOMPe} and the hybrid approach yield identical exact recovery rates, while the latter is significantly faster. This acceleration is proportional to the exact recovery rate of OMP. That is, the hybrid approach is faster where OMP is better. These results show that this approach is indeed able to detect the OMP failures, and run AMul-{\AOMPe} only for those instances.

\subsection{Demonstration on a Sparse Image}

To illustrate AMul-{\AOMPe} on a more realistic coefficient distribution, we demonstrate recovery of some commonly used $512 \times 512$ images including ‘Lena’, ‘Tracy’, ‘Cameraman’, etc. The recovery is performed in $8 \times 8$ blocks in order to break the problem into smaller and simpler subproblems. Each image is first preprocessed to obtain $K$-sparse blocks in the 2D Haar Wavelet basis $\mathbf{\Psi}$ by keeping the $K$ largest magnitude wavelet coefficients for each block\footnote{This formulation involves the structured dictionary $\mathbf{\Psi}$ for sparse representation of the images. The observation model becomes $\y =  \mathbf{\Phi}\mathbf{\Psi}\x$, where the reconstruction basis is not $\mathbf{\Phi}$ alone, but $\mathbf{\Phi}\mathbf{\Psi}$.}. We select $K=12$ for the image 'Bridge', and $K=14$ for the rest. $M=32$ observations are taken from each block. The entries of $\mathbf{\Phi}$ are randomly drawn from the Gaussian distribution with mean 0 and standard deviation $1/N$. We set $I=3$, $P=200$ and $\Kmax = 20.$ $\alpha_{\text{AMul}}$ is reduced to 0.85 in order to compensate the decrement in the auxiliary term of (\ref{Eq:AMul_CM}) due to smaller $\Kmax$. Peak Signal-to-Noise Ratio (PSNR) values obtained by employing different recovery algorithms are given in Table~\ref{Table:compImage}.
For each image, maximum and minimum PSNR values obtained are shown in bold and in italics, respectively.
Mean PSNR for each algorithm is also given on the last row.
AMul-{\AOMPe} exhibits significant improvements over the competitors and yields the maximum PSNR for all images tested.
Increasing $B$ from 2 to 3 further improves PSNR, yielding improvements of 15.7 dB over BP, 15.3 dB over ISD, 17.7 dB over SL0, and 11.7 dB over MMP-DF on the average. As a visual example, we depict  the reconstruction of the test  image ``Bridge'' using BP and AMul-{\AOMPe} with $B=3$ in \figurename~\ref{fig:bridge}. In this case, BP yields 29.9 dB PSNR, while AMul-{\AOMPe} improves the PSNR to 51.4 dB. Though not depicted in \figurename~\ref{fig:bridge}, AMul-{\AOMPe} outperforms BP with 46.8 dB when $B=2$ as well. A detailed investigation of the recovered images reveals that AMul-{\AOMPe} improves the recovery especially at detailed regions and boundaries.

\renewcommand{\arraystretch}{1.2}
\begin{table*}[!t]
\centering
\caption{PSNR values for images reconstructed using different algorithms. Maximum and minimum PSNR values are shown in bold and in italics, respectively. Mean PSNR values over the whole set of images are given in the last row of the table.}
{\footnotesize{
\begin{tabular}{p{2cm}  x{1.3cm}  x{1.3cm}  x{1.3cm}  x{1.3cm} x{1.3cm}  x{1.3cm}  x{1.3cm}  x{1.3cm}  x{1.3cm}}  \hline \hline
&\multirow{2}{*}{BP}&\multirow{2}{*}{OMP}&\multirow{2}{*}{SP} &\multirow{2}{*}{IHT}&\multirow{2}{*}{ISD}&\multirow{2}{*}{SL0} & \multirow{2}{*}{MMP-DF} & \multicolumn{2}{c}{AMul-A*OMP}  \tabularnewline \cline{9-10}
&&&&&&&&B=2&B=3 \tabularnewline \hline \hline
Bridge	  	& 29.9 & 26.9 & 24.6 & \emph{19.8} & 32.1 & 29.1 & 36.9 &46.8  & \textbf{51.4} 	\tabularnewline \hline
Lena      	& 33.5 & 29.6 & 27.5 & \emph{22.9} & 33.2 & 30.6 & 36.4 & 42.6 & \textbf{47.1}  \tabularnewline \hline
Tracy    	& 40.6 & 36.8 & 33.9 & \emph{27.6} & 39.4 & 38.2 & 43.8 & 52.5 & \textbf{56.8} 	\tabularnewline \hline
Pirate   	& 31.7 & 27.7 & 25.3 & \emph{21.5} & 32.4 & 30.3 & 35.7 & 40.3 & \textbf{43.4} 	\tabularnewline \hline
Cameraman 	& 34.4 & 30.7 & 28.5 & \emph{23}    & 33.9 & 31.8 & 37.4 & 48.3 & \textbf{54.7}  	\tabularnewline \hline
Mandrill  	& 28.3 & 24.4 & 22.1 & \emph{19.2} & 29.7 & 26.8 & 32.1 & 36.3 & \textbf{39.9} 	\tabularnewline \hline
Mean PSNR  	& 33.1 & 29.4 & 27    & \emph{22.3} & 33.5 & 31.1 & 37.1 & 44.5 & \textbf{48.8}	\tabularnewline \hline \hline
\end{tabular}
}}
\label{Table:compImage}
\end{table*}

\begin{figure*}[!t]
\begin{center}
\includegraphics[width=\linewidth]{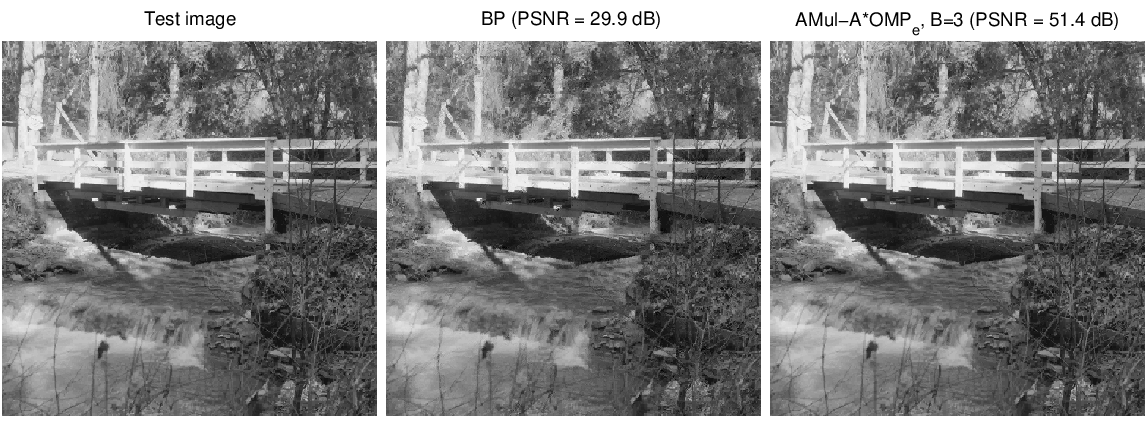}
\end{center}
\caption{Recovery of the image ``Bridge" using BP and AMul-{\AOMPe}.}
\label{fig:bridge}
\end{figure*}

\section{Summary}
\label{sec:conc}

The fundamental goal of this manuscript is a comprehensive analysis of sparse recovery using {\AOMP}, with a particular focus on the novel variant AMul-{\AOMPe}. We have addressed this issue with emphasis on both theoretical and practical aspects.

We have presented a theoretical analysis of signal recovery with {\AOMP}. We have first derived a RIP condition for the success of an {\AOMP} iteration. Then, we have generalized this result for the exact recovery of all $K$-sparse signals from noise-free measurements both with {\AOMPK}, where the termination is based on the sparsity level $K$, and with {\AOMPe}, which employs the residue-based termination criterion. We have observed that both {\AOMP} variants enjoy similar RIP-based general exact recovery guarantees.
In addition, we have presented online guarantees for {\AOMPe}, which can be satisfied even when the general guarantees fail.
This has led to the conclusion that {\AOMPe} is more advantageous for sparse recovery, since it enjoys some online guarantees in addition to the general recovery guarantees analogous to those of {\AOMPK}.
This encourages utilising the residue-based termination instead of the sparsity-based one for recovery from noise-free observations.

We have demonstrated AMul-{\AOMPe} in a wide range of recovery simulations involving sparse signals with different characteristics.
According to the results, {\AOMP} performs better recovery than all the other candidates for uniform and Gaussian sparse signals.
Among {\AOMP} variants, AMul-{\AOMPe} promises the most accurate recovery and fastest execution times.
For CARS sparse signals, AMul-{\AOMPe} recovery is still better than the involved greedy alternatives, however BP is the most accurate algorithm in this case.
We have also shown that the search can be significantly accelerated without sacrificing the accuracy via a hybrid approach, which first applies OMP, and then AMul-{{\AOMPe}} only if OMP fails.
Finally, we have employed AMul-{\AOMPe} on sparse images, where it improves the recovery significantly over BP.

The AMul cost model with the residue-based termination has demonstrated strong empirical performance while also providing more greed due to the allowance for a larger $\alpha$. Hence, AMul-{\AOMPe} turns out to be the most promising {\AOMP} variant in this manuscript.
As future work, it is worth to investigate different cost model structures which may improve speed and convergence of the algorithm in specific problems.
For example, the cost model may be formulated to reflect the expected statistics of the signal of interest. Such a strategy would be problem-dependent, however it may guide the algorithm faster and more accurately to the desired solution.
Combining {\AOMP} with sparsity models, i.e., signals with specific sparsity patterns, is another promising future work direction.

\bibliographystyle{model1-num-names}

\vspace{1cm}

Nazim Burak Karahanoglu is a senior researcher at the Informatics and Information Security Research Center, The Scientific and Technological Research Council of Turkey (TUBITAK BILGEM) in Kocaeli, Turkey. He received his B.S. degree in Electrical and Electronics Engineering from METU, Ankara  in 2003, M.S. degree in Computational Engineering from the Friedrich-Alexander University of Erlangen-Nuremberg, Germany in 2006 and Ph.D. degree in Electronics Engineering from Sabanci University, Istanbul in 2013. He has been with TUBITAK BILGEM since 2008.  His research interests include compressed sensing and sonar signal processing.

\vspace{1cm}
Hakan Erdogan is an associate  professor at Sabanci University in Istanbul, Turkey. He received his B.S. degree in Electrical Engineering and Mathematics in 1993 from METU, Ankara and his M.S. and Ph.D. degrees in Electrical Engineering: Systems from the University of Michigan, Ann Arbor in 1995 and 1999 respectively. He was with the Human Language Technologies group at IBM T.J. Watson Research Center, NY between 1999 and 2002. He has been with Sabanci University since 2002. His research interests are in developing and applying probabilistic methods and algorithms for multimedia information extraction.

\end{document}